\documentclass[12pt]{scrartcl}
	\usepackage[english]{babel}
	\usepackage{amssymb,amsfonts,amsmath, amsthm}
	\usepackage{mathtools}
	\usepackage{xcolor}
	\usepackage{setspace}
    \usepackage{csquotes}
    \usepackage{dsfont}
    \usepackage{sgame}
    \usepackage{tikz}
    \usepackage{hyperref}
    \hypersetup{colorlinks=true, linkcolor=blue, urlcolor=blue, citecolor=blue}

    \usepackage[backend=biber, style=authoryear, sorting=nyt]{biblatex}
    \newcommand{\citet}[1]{\textcite{#1}}
    \newcommand{\citep}[1]{(\cite{#1})}

	\newcommand{\bbE}{\mathbb{E}}

	\newcommand{\bbP}{\mathbb{P}}

	\makeatletter
		\newcommand{\Matrix}[1] 	
		    {\begin{pmatrix}
		      \Matrix@r #1;\@bye;\Matrix@r
		     \end{pmatrix}}
		\def\Matrix@r #1;{\@bye #1\Matrix@z\@bye\Matrix@s #1,\@bye, }%
		\def\Matrix@s #1,{#1\Matrix@t }%
		\def\Matrix@t #1,{\@bye #1\Matrix@y\@bye\@firstofone {&#1}\Matrix@t}%
		\def\Matrix@y #1\Matrix@t{\\ \Matrix@r }%
		\def\Matrix@z #1\Matrix@r {}
		\def\@bye  #1\@bye {}
	\makeatother
	
		\newtheorem{claim}{Claim}
	\newtheorem{corollary}{Corollary}

	\newtheorem{definition}{Definition}
	\newtheorem{lemma}{Lemma}
	\newtheorem{proposition}{Proposition}

	\DeclareMathOperator{\var}{Var}
	
	\DeclareMathOperator*{\argmin}{{arg\,min}}
	\DeclareMathOperator*{\argmax}{{arg\,max}}

\newcommand{\tht}{\theta}
\newcommand{\ve}{\varepsilon}

	\newcommand{\given}{\ifnum\currentgrouptype=16 \;\middle|\;\else\mid\fi}

\newcommand{\ind}{\mathds{1}}

\title{Cross-Validation Equilibrium%
\thanks{We acknowledge financial support from UKRI Frontier Research Grant no. EP/Y033361/1. We thank participants at the 5th Southeast Theory Festival and the referees at EC2026, for helpful comments.}}

\subtitle{}

\author{
Ran Spiegler%
\thanks{Tel Aviv University and University College London, \href{mailto:rani@tauex.tau.ac.il}{rani@tauex.tau.ac.il}.}
\ and
Stephan Waizmann%
\thanks{University College London, \href{mailto:stephanwaizmann@gmail.com}{stephanwaizmann@gmail.com}.}
}
	
\begin{document}

\onehalfspacing

\maketitle

\begin{abstract}
	   We study strategic interaction when players delegate belief formation to predictive machine learning (ML). In a static Bayesian game, each player’s ML agent predicts a payoff-relevant outcome variable as a function of the player’s type. The ML agent's training sample is endogenous: it is drawn from the outcome distribution generated by players’ ML-guided behavior. In Cross-Validation Equilibrium (CVE), each player's ML agent selects a predictive model to minimize expected out-of-sample squared error, given its realized training sample; and each player best-replies to the belief generated by the model her ML agent selected. We analyze CVE and relate it to other equilibrium concepts. We apply CVE to jury voting, speculative betting, and games with linear-quadratic payoffs. E.g., in a team-effort game, endogenous model selection can give rise to multiple equilibria.

\end{abstract}

\newpage

\section{Introduction}

The use of machine learning (ML) algorithms to form predictive beliefs in economic settings has become ubiquitous. Credit agencies employ ML to evaluate borrowers' creditworthiness. Judges make use of pretrial risk assessment algorithms to predict whether a person accused of a crime will fail to show up at court or engage in criminal activity if they are bailed out. In these and other cases, ML algorithms use a random training sample to generate conditional probabilistic predictions; the ultimate decision power rests with the human decision-maker who employs these tools.

This paper addresses the following question: What are the effects of predictive ML when its users are players in a static game with possibly incomplete information? In this context, an individual player tasks an \textquotedblleft ML agent\textquotedblright\ with constructing a predictive model (selected on the basis of a random training sample) that outputs a belief over payoff-relevant variables (e.g., an opponent's action) as a function of the player's signal. The random training sample used by the ML agent is drawn from an endogenous distribution induced by all players' responses to their own ML-generated beliefs. We are interested in the implications of this two-way linkage between players' behavior and ML-generated predictive models.

We develop a game-theoretic equilibrium concept, dubbed \textit{Cross-Validation Equilibrium (CVE)}, in which players' equilibrium beliefs are formed according to a stylized model of predictive ML. Our concept captures the idea that predictive ML has access to a noisy training sample --- drawn from the equilibrium distribution --- and uses it to construct a predictive model of a payoff-relevant variable. The predictive model is selected from some feasible set of models according to its \textit{out-of-sample predictive performance}, measured by the \textit{expected mean squared prediction error}. Thus, while predictive models are estimated against the training sample, their selection is based on the out-of-sample predictive success of each estimated model. This model-selection criterion captures in stylized form the notion of \textit{cross validation} that characterizes many ML tools (e.g., see Chapter 7 in \citet{HastieEtAl2009}).

Since players' endogenous training samples are random, so is the predictive model they end up adopting, and therefore so is their resulting behavior. Thus, CVE involves randomness in players' subjective models, the beliefs these models generate, and the behavior they induce. Nash equilibrium is a limit case of our concept, corresponding to CVE with perfectly noiseless samples, which give rise to correct models and therefore correct beliefs. Another limit case is Analogy Based Expectation Equilibrium \citet{Jehiel2005}, which corresponds to a CVE in which each player's model is exogenously given and training samples are noiseless. 

For a simple example that illustrates our modeling approach, consider a standard symmetric two-firm Cournot competition with linear demand and a random, commonly observed demand parameter $\theta $. Firm $i$'s ML agent aims to predict firm $j$'s expected quantity as a function of $\theta $. Imagine that for each value of $\theta $, firm $i$'s ML agent observes firm $j$'s expected quantity conditional on $\theta $ plus independent Gaussian noise. This noisy observation is $not$ a standard game-theoretic signal; it is an observation of the \textit{equilibrium} link between $\theta $ and firm $j$'s behavior. The profile of noisy measurements constitutes the \textquotedblleft training sample\textquotedblright\ at the disposal of firm $i$'s ML agent.

Firm $i$'s ML agent selects a \textit{predictive model}, which (following \citet{Jehiel2005}) is a \textit{partition} of the set of parameter values. For each partition cell, the prediction is the weighted training-sample average of firm $j$'s quantity conditional on the cell. The ML agent chooses a partition that minimizes the expected \textit{out-of-sample squared prediction error}. This procedure naturally involves a bias-variance trade-off: For some training-sample noise realizations, the ML agent will adopt a coarse partition that bundles together different values of $\theta $, even if they are associated with different behavior by firm $j$. In CVE, firm $i$ best-replies to the prediction induced by the selected partition.

We establish that a CVE exists in finite environments. Moreover, a CVE converges to an ABEE with respect to the finest feasible partition as the training-sample noise vanishes.  We also highlight basic properties of CVE. First, the expected training-sample noise conditional on any partition cell is zero. This unbiasedness property, which improves the concept's tractability, is not obvious a priori, as the partition itself is endogenously selected. It arises from our criterion for out-of-sample predictive success, and from the assumption of symmetric noise. Second, the ML agent avoids over-fitting, in the sense that if the predicted variable is measurable with respect to some partition, the ML agent will (almost) never adopt a finer one. 

We then apply CVE to various games, to explore its economic applications. The concept is especially tractable in games with linear-quadratic payoffs. For example, in the above-mentioned Cournot competition example (when the only feasible partitions are the fully fine and fully coarse ones), we show that the volatility of quantities and prices in the unique symmetric CVE rises with the variance of the training-sample noise. We also show that unlike Nash equilibrium, asymmetric CVE are possible in this game. Another example involves a team effort problem, where unlike the Cournot example, the game exhibits strategic complementarities. As a result, and despite the linear best-replies that imply a unique Nash equilibrium, there can be \textit{multiple} symmetric CVE. We also establish a sense in which CVE in the team effort game is discontinuously different from Nash equilibrium, even when the training sample noise is arbitrarily small. We also apply CVE to a two-player version of \citet{FeddersenPesendorfer1998}'s jury model, as well as a simple two-player game of speculative trade, where we show that when one player is restricted to use the correct model, this player may make an expected loss to her opponent in CVE.\bigskip 

The rest of the paper is organized as follows. Section \ref{sec: model} presents the model and defines Cross-Validation Equilibrium. The general results are contained in Section \ref{sec: results}. In Section \ref{sec: applications}, we apply CVE to examples. Proofs of the main results are in the \hyperref[sec: proofs]{Appendix}.

\section{The Model}
\label{sec: model}

Consider the following, slightly non-standard formulation of a Bayesian game. The set of players is $N=\{1,...,n\}$. Let $\Theta$ be a set of states of Nature. For each player $i\in N$, let $A_{i}$ be the player's action set, and let $T_{i}$ be the set of types for this player. Denote $T=\times _{i\in N}T_{i}$ and $A=\times _{i\in N}A_{i}$. Let $\mu \in \Delta(\Theta \times T)$ be an objective prior distribution over the exogenous variables. A strategy for player $i$ is a function $\sigma_{i}:T_{i}\rightarrow \Delta (A_{i})$.

So far, the exposition is standard. We slightly deviate from the norm in our definition of payoffs. Let $X_{i}=\mathbb{R}$ be a set of \textit{payoff-relevant outcomes} for player $i$. Let $f_{i}:\Theta \times A_{-i}\rightarrow X_{i}$ be a function that maps states and profiles of actions by player $i$'s opponents to her payoff-relevant outcomes. Let $u_{i}:A_{i}\times T_{i}\times X_{i}\rightarrow \mathbb{R}$ be player $i$'s \textit{continuous} vNM utility function, which is in particular \textit{linear} in $x_{i}$. The scalar variable $x_{i}$ is a \textquotedblleft sufficient statistic\textquotedblright\ that pins down player $i$'s payoff given her action and type. This variable is what player $i$'s ML-based belief-formation procedure will aim to predict.

The following examples illustrate this non-standard aspect of the formalism. In both examples, we leave the information structure given by $\mu $ unspecified, because it is immaterial for the illustration.\bigskip

\noindent \textit{Example 2.1} (mixtures in binary-action games). Let $n=2$ and $A_{1}=A_{2}=[0,1]$, where $a_{i}$ represents the probability that player $i$ chooses a particular action in a binary-action game. In this case, let $f_{i}(\theta ,a_{j})=a_{j}$. When $u_{i}$ is player $i$'s expected utility in the binary-action game, linearity of $u_{i}$ in $x_{i}$ follows automatically.\bigskip

\noindent \textit{Example 2.2 (a linear Cournot game)}. Let $n=2$ and $A_{1}=A_{2}=\mathbb{R}$. Let $\theta $ be a demand parameter, such that $u_{i}=a_{i}[\theta-a_{i}-a_{j}]$. (Allow the price $\theta -a_{i}-a_{j}$ to take negative values.) Consider two alternative specifications of $x_{i}$. First, let $f_{i}(\theta ,a_{j})=a_{j}$ --- i.e., the sufficient statistic for $u_{i}$ is player $j$'s quantity. Second, let $f_{i}(\theta ,a_{j})=\theta -a_{j}$ --- i.e., the sufficient statistic for $u_{i}$ is the residual demand that player $i$ faces.\bigskip

The combination of $\mu $ and a strategy profile $\sigma =(\sigma_{1},...,\sigma _{2})$ induces a joint distribution over $\Theta \times T\times A$, 
\begin{equation}
p^{\sigma }(\theta ,t,a)=\mu (\theta ,t)\cdot \prod_{i\in N}\sigma
_{i}(a_{i}\mid t_{i})  \label{eq: p-sigma}
\end{equation}
For convenience, we also use the notation $p^{\sigma }$ for marginal and conditional distributions derived from this joint distribution. Let
\begin{equation}
\bar{x}_{i}^{\sigma }(t_{i})=\sum_{\theta ,a_{-i}}p^{\sigma }(\theta
,a_{-i}\mid t_{i})f_{i}(\theta ,a_{-i})  \label{eq: x-bar-sigma}
\end{equation}
be the expectation of $x_{i}$ conditional on player $i$'s type, according to $p^{\sigma }$. Note that $p^{\sigma }(\theta ,a_{-i}\mid t_{i})$ --- and therefore also $\bar{x}_{i}^{\sigma }$ --- only depend on the profile $\sigma _{-i}$; they are invariant to player $i$'s own mixed strategy $\sigma_{i}$.

We now describe how player $i$ forms a prediction of her payoff-relevant outcome given her type. Fix $\sigma _{-i}$. This means that $\bar{x}_{i}^{\sigma }(t_{i})$ is also held fixed for every $t_{i}$. An \textit{ML agent} at player $i$'s service has access to a \textit{training sample}, described by
\begin{equation}
y^{t_{i}}=\bar{x}_{i}^{\sigma }(t_{i})+\varepsilon _{t_{i}}
\label{eq: training-sample}
\end{equation}
for every $t_{i}\in T$, where $\varepsilon _{t_{i}}$ is $i.i.d$ according to a distribution $\upsilon $ that is distributed symmetrically around zero.  Since $\bar{x}_{i}^{\sigma }(t_{i})$ is calculated with respect to the conditional distribution $p^{\sigma }(\theta ,a_{-i}\mid t_{i})$, the training sample can be viewed as a \textquotedblleft representative sample\textquotedblright\, in the spirit of \citet{DanenbergSpiegler2025}. 

For expositional convenience, we will assume here that $\upsilon $ has finite support (in applications, we employ continuous noise as well). Thus, given $\bar{x}_{i}^{\sigma }=(\bar{x}_{i}^{\sigma}(t_{i}))_{t_{i}\in T_{i}}$, we identify the sample with $\varepsilon
_{i}=(\varepsilon _{t_{i}})_{t_{i}\in T_{i}}$. For any $M\subseteq T_{i}$, let 
\begin{equation}
\hat{p}^{\varepsilon _{i}}(x_{i}\mid M)=\sum_{t_{i}\in M}\mu (t_{i}\mid
t_{i}\in M)\cdot \mathbf{1}\{\bar{x}_{i}^{\sigma }(t_{i})+\varepsilon
_{t_{i}}=x_{i}\}  \label{eq: empirical-distribution}
\end{equation}%
be the frequency of $x_{i}$ in the sub-sample $\{\varepsilon_{t_{i}}\}_{t_{i}\in M}$, weighted according to the distribution of $t_{i}$ conditional on $t_{i}\in M$. Likewise, let
\begin{equation}
\hat{x}^{\varepsilon _{i}}(M)=\sum_{t_{i}\in M}\mu (t_{i}\mid t_{i}\in M)(%
\bar{x}_{i}^{\sigma }(t_{i})+\varepsilon _{t_{i}})  \label{x hat}
\end{equation}%
be the weighted average value of $x_{i}$ in the sub-sample.

Given the sample $\varepsilon _{i}$, the ML agent chooses a partition $\Pi_{i}(\varepsilon _{i})$ of $T_{i}$ from some family $\mathcal{P}_{i}$ of feasible partitions. Let $\pi (t_{i})\in \Pi _{i}(\varepsilon _{i})$ be the partition cell that includes $t_{i}$. The ML agent evaluates the partition's out-of-sample predictive performance with a \textit{validation sample}, which observes $\bar{x}_{i}^{\sigma }(t_{i})$ with $i.i.d$ zero-mean noise $\eta _{t_{i}}$ for each $t_{i}$. The ML agent selects the partition with the best average out-of-sample predictive performance. This partition generates a belief, and player $i$ best-replies to this belief. In what follows, $g_{i}:T_{i}\rightarrow A_{i}$ denotes a pure ex-ante strategy for player $i$

\begin{definition}
    \label{def: compatible}
A pair $(\Pi _{i},g_{i})$ is compatible with $\sigma _{-i}$ and $\varepsilon
_{i}$ if%
\begin{equation}
\Pi _{i}\in \arg \min_{\Pi _{i}^{\prime }\in \mathcal{P}_{i}}\bbE_{\eta}\sum_{t_{i}\in T_{i}}\mu (t_{i})[\hat{x}^{\varepsilon _{i}}(\pi^{\prime }(t_{i}))-(\bar{x}_{i}^{\sigma }(t_{i})+\eta _{t_{i}})]^{2} \label{eq: optimal-partition}
\end{equation}%
and, for every $t_{i}\in T_{i}$,%
\begin{equation}
g_{i}(t_{i})\in \arg \max_{a_{i}\in A_{i}}\sum_{x_{i}\in X_{i}}\hat{p}^{\varepsilon _{i}}(x_{i}\mid \pi (t_{i})) \, u_{i}(a_{i},t_{i},x_{i}) \label{eq: optimal-action}
\end{equation}
\end{definition}

The first condition requires that given the training sample $\varepsilon_{i} $, the partition $\Pi_{i}$ minimizes the expected out-of-sample mean squared prediction error, with respect to the true expected payoff-relevant outcomes in the validation sample given by $\bar{x}_{i}^{\sigma }$. The expectation with respect to $\eta $ means that the ML agent evaluates a partition's predictive performance across many validation samples. The second condition is that player $i$'s ex-ante pure strategy prescribes a best-reply to the belief induced by the partition $\Pi $. Note that since $u_{i}$ is assumed to be linear in $x_{i}$, we can rewrite player $i$'s expected payoff in (\ref{eq: optimal-action}) as
\begin{equation}
 u_{i}(a_{i},t_{i},\sum_{x_{i}\in X_{i}}\hat{p}^{\varepsilon _{i}}(x_{i}\mid \pi (t_{i}))x_{i} \,).
\end{equation}

We are now able to close the model and define when a strategy profile constitutes an equilibrium.\bigskip

\begin{definition}
    \label{def: CVE}
A strategy profile $\sigma $ is a Cross-Validation Equilibrium (CVE) if for every $i\in N$ and every training  sample realization $\varepsilon _{i}$, there is a mixture $\beta _{\varepsilon _{i}}$ over pairs $(\Pi _{i},g_{i})$ that are compatible with $\sigma _{-i}$ and $\varepsilon_{i}$, such that for every $t_{i}\in T_{i}$,
\begin{equation}
\sigma _{i}(a_{i}\mid t_{i})=\sum_{\varepsilon _{i}}\upsilon (\varepsilon_{i})\sum_{(\Pi _{i},g_{i})\mid g_{i}(t_{i})=a_{i}}\beta _{\varepsilon_{i}}(\Pi _{i},g_{i})  \label{eq: strategy}
\end{equation}
\end{definition}

Thus, in CVE, a player's mixed strategy arises from her ML-based belief-formation procedure. The player's ML agent has access to a random sample of the type-dependent payoff-relevant outcome $\bar{x}_{i}^{\sigma }$, forms an out-of-sample-optimal partition of $T_{i}$ that induces a type-dependent belief over payoff-relevant outcome. The player best-replies to this belief. Because the player's sample is random, so is her best-replying behavior, and this randomness is described by the player's mixed strategy.

\paragraph{A simple illustrative example}
Consider the following symmetric, binary-action, two-person game with complete information. Players' action set is $\{0,1\}$. The set of states of Nature is $\{0,1\}$. The two states are equally likely. The state is commonly known so that $t_i=\tht$ with probability $1$. When $\theta =0$, player $i$'s payoff is $-a_{i}$. When $\theta =1$, player $i$'s payoff is $a_{i}(\frac{4}{3}a_{j}-1)$. In the notation of the model,
\[u_i(a_i, \theta, x)=
    \begin{cases}
        - a_i & \tht=0\\
        a_i (\frac{4}{3} x -1) & \tht=1
    \end{cases}
\]
and $f_i(\tht, a_j)=a_j$. Consequently, we can interpret $\overline{x}^\sigma_i(\tht)=\sigma_j(1\mid\tht)$ as the probability with which player $j$ plays action $a_j=1$ when the state is $\tht$.

Clearly, the game has a Nash equilibrium in which both players play $a=\theta $.\footnote{In state $\tht=0$, this is the only Nash equilibrium. In state $\tht=1$, there are two more Nash equilibria: $a_1=a_2=0$ and $\sigma_1(1)=\sigma_2(1)=3/4$.} This is the efficient outcome. Note that player $i$'s best-reply in state $\tht=1$ is $a_{i}=1$ if she assigns probability above $\frac{3}{4}$ to player $j$ choosing $a_{j}=1$, i.e., if $\overline{x}^\sigma_i(1)\geq \frac{3}{4}$.

Assume the training sample noise distribution is uniform over $\{-d,d\}$, where $d>\frac{1}{2}$. $V$ to denote the variance of the validation sample noise. Use We solve for symmetric CVE. Denote $\overline{x}_\tht=\overline{x}^\sigma_i(\tht)$.

In state $\theta =0$, each player $i$ strictly prefers playing $a=0$, independently of her belief regarding $x_{i}$. Consequently, $\overline{x}_0=0$. Therefore, we only need to derive the probability that each player chooses $a=1$ in state $\theta =1$. This probability equals $\overline{x}_1 $. 

The expected out-of-sample prediction error when adopting the fine partition ${{0},{1}}$ is $\frac{1}{2}(\ve_0^2+\ve_1^2) +V$, whereas the out-of-sample prediction error from adopting the coarse partition ${{0,1}}$ is $(\frac{\overline{x}_1}{2} +\frac{\ve_0+\ve_1}{2})^2 +V$.
When $(\varepsilon _{1}-\varepsilon _{0})^{2}\geq (\bar{x}_{1}-\bar{x}_{0})^{2}=\overline{x}_1^{2}$, player $i$'s ML agent adopts the coarse partition, and the player's estimate of $x_{i}$ is $\frac{1}{2}(\overline{x}_1 +\varepsilon_{0}+\varepsilon _{1})$. When $(\varepsilon _{1}-\varepsilon_{0})^{2}<\overline{x}_1 ^{2}$, player $i$'s ML agent adopts the fine partition, and the player's estimate of $x_{i}$ is $\overline{x}_1 +\varepsilon _{1}$. The player will play $a=1$ only if her estimate of $x_{i}$ is weakly above $\frac{3}{4}$.

Let us now use these conditions, combined with the noise distribution, to derive $\overline{x}_1 $. When $\varepsilon _{1}=\varepsilon _{0}$, the ML agent necessarily adopts the fine partition. Thus, the coarse partition is adopted only if $\varepsilon _{1}=-\varepsilon _{0}$. But in this case, the player
plays $a=1$ only if
\[
\frac{1}{2}(\overline{x}_1 +\varepsilon _{0}+\varepsilon _{1})=\frac{1}{2}\overline{x}_1 >\frac{3}{4}
\]
which this is impossible. It follows that $\overline{x}_1$ is bounded from above by the probability that an ML agent adopts the fine partition.

When a player's ML agent adopts the fine partition, she plays $a=1$ only if $\overline{x}_1 +\varepsilon _{1}\geq \frac{3}{4}$. Since $d>\frac{1}{2}$, it follows that $\varepsilon _{1}=d$ is necessary for playing $a=1$. Thus, $\overline{x}_1 \leq \frac{1}{2}$. If $\varepsilon _{0}=-d$, then 
\[
(\varepsilon _{1}-\varepsilon _{0})^{2}=4d^{2}>\frac{1}{4}>\overline{x}_1 ^{2}
\]
such that the ML agent adopts the coarse partition. It follows that players choose $a=1$ only if $\varepsilon _{1}=\varepsilon _{0}=d$, such that $\overline{x}_1 \leq \frac{1}{4}$. 
It follows that there exists a  CVE with $\overline{x}_1=\frac{1}{4}$.

\paragraph{Discussion}
We conclude this section with a discussion of some of our modeling choices.

In our framework, the player's belief formation is entirely delegated to the ML agent. The player herself is not assumed to have a thorough understanding of the game she is playing. Indeed, she need not perceive the situation as a game; she is interested in quantifying the mapping from $t_{i}$ to $x_{i}$. In this sense, our model follows the approach of \citet{OsborneRubinstein1998}. But while in that paper, players relied on naive sampling to evaluate such mappings, in our model the player employs an ML agent to quantify the mapping. The player does understand that $x_{i}$ should be attributed to $t_{i}$ only, i.e., that it is independent of $a_{i}$. She may also have certain preconceptions that limit the set of partitions she allows the ML agent to explore. But that is the only extent of her understanding of the situation, as far as belief formation is concerned.

The space of models that a player's ML agent explores is a set of partitions of the player's type space. Of course, one could imagine other classes of predictive models; our use of partitions can be regarded as a \textquotedblleft proof of concept.\textquotedblright\  It has two advantages. First, it is non-parametric and can be applied to any Bayesian game (whereas parametric predictive models would require knowledge of the specifics of the Bayesian game). Second, it enables us to form a link to the concept of Analogy-Based Expectations Equilibrium (ABEE, due to \citet{Jehiel2005}), as we will see in Section \ref{sec: results}.

Note that the domain of partitions that the ML agent considers may be restricted. There are two motivations for this feature of our framework. First, the player may have a-priori theoretical reasons to impose structure on the set of predictive models. For example, when $T_{i}$ is two-dimensional, i.e., it is the product of two sets, the player may believe that this product structure should be preserved in any model that predicts $x_{i}$. Second, we rely on restricted domains in examples for tractability, e.g. by assuming that the only feasible partitions are the fully coarse and fully fine ones.
Tractability considerations also lie behind our treatment of the training sample noise --- specifically, our assumption that the noise is additive and independent of the true expected value of the predicted payoff-relevant outcome.

Our assumption that $x_{i}$ is a scalar is made for expositional simplicity. A natural extension would be that $x \in \mathbb{R}^n$, and that out-of-sample predictive fit would be evaluated by the sum of squared prediction errors across dimensions. Our assumption that player $i$'s payoff is linear in $x_{i}$ should be viewed in this context. Linearity implies that when player $i$ has uncertainty about $x_{i}$, the mean with respect to her belief is a sufficient statistic for predicting her payoff --- i.e., she only needs to predict a scalar variable in order to predict her payoff. If we relaxed linearity, the restriction to a scalar $x_{i}$ would be less well-founded. Without linearity, the tight link between CVE and ABEE would break down.

CVE is based on a criterion for out-of-sample predictive success that involves taking an expectation over the validation sample noise. One intepretation is that the ML agent's estimated model is pitted against multiple (independent) validation samples, and the model is evaluated according to its average predictive success across these samples. However, our motivation behind this assumption is tractability: as we will see below, it makes the analysis of CVE more tractable. Note that the exact shape of the validation sample noise is immaterial for our purposes. 

\section{General Results}
\label{sec: results}

In this section, we present results that hold for all finite games. We show that a CVE exists. Furthermore, we consider convergence of CVE as the training sample becomes perfect. We conclude with two results that illustrate properties of the out-of-sample prediction criterion. All proofs are contained in the \hyperref[sec: proofs]{Appendix}.


Say that the game is finite if each player's action set, type space, the set of nature, and the support of the training sample is finite.\bigskip

\begin{proposition}
    \label{prop: existence}
    In every finite game, a CVE exists.\\
\end{proposition}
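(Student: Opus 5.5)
The plan is to apply Kakutani's fixed-point theorem to a correspondence on the space of mixed-strategy profiles $\Sigma=\times_{i\in N}\Delta(A_i)^{T_i}$. This is a nonempty, compact, convex subset of a finite-dimensional Euclidean space because the game is finite.

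\textbf{Step 1: define the correspondence.} For $\sigma\in\Sigma$, let $\Phi_i(\sigma)$ be the set of $\sigma_i'$ such that there exist, for each training-sample realization $\varepsilon_i$ in the finite support of $\upsilon$, a mixture $\beta_{\varepsilon_i}$ over pairs $(\Pi_i,g_i)$ compatible with $\sigma_{-i}$ and $\varepsilon_i$, with $\sigma_i'$ given by (\ref{eq: strategy}). Set $\Phi=\times_i\Phi_i$. A fixed point of $\Phi$ is exactly a CVE by Definition \ref{def: CVE}.

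\textbf{Step 2: nonemptiness and convexity.} Fix $\varepsilon_i$. The set $\mathcal P_i$ is finite, so the minimizer in (\ref{eq: optimal-partition}) exists. For a given cell, $u_i(\cdot,t_i,\cdot)$ is defined on a finite $A_i$, so (\ref{eq: optimal-action}) has a maximizer. Hence the set $C(\sigma_{-i},\varepsilon_i)$ of compatible pairs is nonempty.

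The map from the vector $(\beta_{\varepsilon_i})_{\varepsilon_i}$ to $\sigma_i'$ is linear. Its domain is a product of simplices over the finite sets $C(\sigma_{-i},\varepsilon_i)$, which is convex. So $\Phi_i(\sigma)$ is the linear image of a convex set and is therefore convex.

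\textbf{Step 3: closed graph (the main obstacle).} Take $\sigma^k\to\sigma$ and $\sigma_i'^k\in\Phi_i(\sigma^k)$ with $\sigma_i'^k\to\sigma_i'$.

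The quantities $\bar x_i^{\sigma}(t_i)$ from (\ref{eq: x-bar-sigma}) are continuous in $\sigma_{-i}$, since $p^\sigma(\theta,a_{-i}\mid t_i)$ is polynomial in $\sigma_{-i}$ (types with $\mu(t_i)=0$ can be dropped). Given $\varepsilon_i$, the objective in (\ref{eq: optimal-partition}) is continuous in $\bar x_i^\sigma$. After expanding the expectation over $\eta$, the $\eta$-term is a constant $\bbE\eta^2$ independent of the partition, and the cross term vanishes because $\bbE\eta=0$.

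Likewise, the expected payoff in (\ref{eq: optimal-action}) is $u_i(a_i,t_i,\hat x^{\varepsilon_i}(\pi(t_i)))$. Here I use the linearity of $u_i$ in $x_i$ to reduce to the mean, and the continuity of $u_i$. This mean is continuous in $\sigma_{-i}$.

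The set of pairs $(\Pi_i,g_i)$ is finite. So by Berge's maximum theorem the set of optimal partitions is upper hemicontinuous in $\sigma_{-i}$. The same holds jointly for best replies given a partition. Concretely, for each finite $(\Pi_i,g_i)$, compatibility is defined by finitely many weak inequalities between continuous functions of $\sigma_{-i}$. Hence if $(\Pi_i,g_i)\in C(\sigma^k_{-i},\varepsilon_i)$ along a subsequence, then $(\Pi_i,g_i)\in C(\sigma_{-i},\varepsilon_i)$.

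Next, pass to a subsequence along which the mixtures $\beta^k_{\varepsilon_i}$, viewed as vectors on the fixed finite set of all pairs, converge to some $\beta_{\varepsilon_i}$. Any pair with positive limit weight is compatible for infinitely many $k$, hence compatible at $\sigma_{-i}$. By linearity of (\ref{eq: strategy}) in $\beta$, this gives $\sigma_i'\in\Phi_i(\sigma)$.

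\textbf{Step 4: conclude.} Kakutani's theorem now yields a fixed point of $\Phi$, which is a CVE.

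The delicate point is Step 3: the upper hemicontinuity of compatibility when ties among partitions or actions break at the limit. This is handled by the finiteness of the pair set combined with the closedness of the weak inequalities.
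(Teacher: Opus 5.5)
Your argument is correct, and it reaches Kakutani's theorem by a more economical route than the paper's.

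The paper enlarges the fixed-point space. Alongside $\sigma$, it carries a distribution $\lambda_i(\varepsilon_i)\in\Delta\mathcal{P}_i$ over partitions for each player and training-sample realization. It then splits the correspondence into $\Psi_1$, which returns partition distributions supported on loss minimizers at $\sigma$, and $\Psi_2$, which returns the strategies generated by the given $\lambda$ together with cell-by-cell best-reply mixtures $h_i(t_i,\varepsilon_i,M)$. Holding $\lambda$ fixed makes each value of $\Psi_2$ a linear image of convex sets. A fixed point is then converted into the mixtures $\beta_{\varepsilon_i}$ of Definition \ref{def: CVE}.

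You instead randomize jointly over compatible pairs $(\Pi_i,g_i)$, exactly as the definition does. The induced strategy is therefore linear in $\beta$, convexity is immediate, the correspondence lives on $\Sigma$ alone, and a fixed point is a CVE verbatim.

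Your closed-graph step also replaces the paper's upper-hemicontinuity claim for $H_2$, which tracks the support points of $\hat p^{\varepsilon_i}(\cdot\mid M)$ along the sequence. Once the objective is written as $u_i(a_i,t_i,\hat x^{\varepsilon_i}(\pi(t_i)))$, continuity in $\sigma_{-i}$ is immediate. The same holds without linearity, writing it as $\sum_{t_i'\in M}\mu(t_i'\mid t_i'\in M)\,u_i(a_i,t_i,\bar x_i^\sigma(t_i')+\varepsilon_{t_i'})$. Compatibility is then a finite system of weak inequalities, which is all you need.

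What the paper's formulation buys is an explicit separation of the model-selection probabilities $\lambda_i$ from the conditional behavior $h_i$. Its convergence proof works with these objects directly. In your version they are recovered as the marginal of $\beta_{\varepsilon_i}$ on partitions and the induced conditional action distributions.
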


The proof applies a standard fixed-point argument to a correspondence $\Psi$ that maps, for each realization of the training sample, a distribution over partitions $\lambda$ and a profile of  mixed strategies $\sigma$ into itself. The correspondence returns the partitions that minimize the out-of-sample prediction error, given the training sample and the strategy profile $\sigma$, as well as the mixed strategies that are compatible with the distribution over partitions $\lambda$ and the strategy profile $\sigma$. A fixed point of $\Psi$ corresponds to a CVE. 
\bigskip


Next, we consider the convergence properties of CVE as the training sample becomes perfect. We first recall the definition of an Analogy Based Expectations Equilibrium, adapted to the notation in our setting.\footnote{Note that we use linearity of $u_i$ in $x$ here.}

\begin{definition}{ABEE, \citet{Jehiel2005}}\\
    \label{def: ABEE}
    $\sigma$ is an \emph{Analogy Based Expectations Equilibrium (ABEE)} with respect to the partitions $\{\Pi_i\}$ if for each player $i$ and type $t_i$, for all $a_i$ with $\sigma_i(a_i|t_i)>0,$
    \begin{align*}
        u_i\left(a_i, t_i, \sum_{t_i^\prime\in \pi(t_i)} \mu(t_i^\prime\mid t_i^\prime \in \pi(t_i)) \overline{x}_i^\sigma(t_i^\prime)\right) 
        \geq u_i\left(a_i^\prime, t_i, \sum_{t_i^\prime\in \pi(t_i)} \mu(t_i^\prime\mid t_i^\prime \in \pi(t_i)) \overline{x}_i^\sigma(t_i^\prime)\right) 
    \end{align*}
    holds for all $ a_i^\prime \in A_i$.
\end{definition}

Say that player $i$'s set of feasible partitions $\mathcal{P}_i$ contains a \emph{finest partition} if there is $\Pi\in \mathcal{P}_i$ such that $\Pi$ refines each $\Pi^\prime \in \mathcal{P}_i$. Denote the finest partition by $\Pi_i^{\textrm{finest}}$. 

For each $n$, let $v^n$ be a distribution that has finite support and is symmetric around $0$. Assume that $\var_{\ve \sim v^n}(\ve)\to 0$ as $n\to \infty$. Let $(\ve^n)$ be a sequence of independent random variables where $\ve^n \sim v^n$. Let $\sigma^n$ be a CVE when the training sample is distributed according to $v^n$.

\begin{proposition}
    \label{prop: convergence-to-ABEE}
    Assume the game is finite. Assume each player's set of feasible partitions contains a finest partition.
  If the limit $\sigma=\lim_{n\to \infty} \sigma^n$ exists, then $\sigma$ is an ABEE with respect to each player's finest partition $\Pi_i^{\textrm{finest}}$.\\
\end{proposition}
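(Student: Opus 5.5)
The plan is to show that, as the training noise vanishes, the prediction that player $i$'s ML agent forms for each type $t_i$ converges in probability to the cell average of $\bar x_i^{\sigma}$ over the cell of $\Pi_i^{\textrm{finest}}$ containing $t_i$. This should hold whichever partition the ML agent actually selects. The ABEE property of the limit then follows from continuity of $u_i$ and finiteness of $A_i$.

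\textbf{Step 1 (decomposing the selection criterion).} Fix $n$ and write $\bar x^n=\bar x_i^{\sigma^n}$. Because $\eta$ has mean zero and is independent of the training sample, the objective in (\ref{eq: optimal-partition}) equals
\[
\sum_{t_i}\mu(t_i)\big[\hat x^{\varepsilon_i}(\pi'(t_i))-\bar x^n(t_i)\big]^2+\textstyle\sum_{t_i}\mu(t_i)\var(\eta_{t_i}).
\]
The second term does not depend on the partition, so only the first term matters for selection. Write $\hat x^{\varepsilon_i}(\pi'(t_i))=E_\mu[\bar x^n\mid\Pi'](t_i)+\bar\varepsilon(\pi'(t_i))$, where $\bar\varepsilon$ is the $\mu$-weighted noise average over the cell. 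Two facts then bound the first term.
\begin{itemize}
\item Since $f_i$ is bounded on the finite set $\Theta\times A_{-i}$, $\bar x^n$ is uniformly bounded.
\item There are finitely many feasible partitions.
\end{itemize}
Hence the first term equals $B_n(\Pi'):=\|E_\mu[\bar x^n\mid\Pi']-\bar x^n\|^2_{L^2(\mu)}$ up to an error at most $C\max_t|\varepsilon_t|$, where $C$ is independent of $n$ and $\Pi'$ (for $\max_t|\varepsilon_t|\le 1$).

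\textbf{Step 2 (near-optimality forces near-equal predictions).} Since $\Pi_i^{\textrm{finest}}$ refines every $\Pi'\in\mathcal P_i$, conditional expectations are nested. The Pythagorean identity in $L^2(\mu)$ therefore gives
\[
B_n(\Pi')-B_n(\Pi^{\textrm{finest}})=\|E_\mu[\bar x^n\mid\Pi']-E_\mu[\bar x^n\mid\Pi^{\textrm{finest}}]\|^2 .
\]
On the event $\max_t|\varepsilon_t|\le\delta$, the selected $\Pi$ must satisfy $B_n(\Pi)\le B_n(\Pi^{\textrm{finest}})+2C\delta$. Combining the two displays, the selected prediction $\hat x^{\varepsilon_i}(\pi(t_i))$ lies within $\sqrt{2C\delta/\mu(t_i)}+\delta$ of $E_\mu[\bar x^n\mid\Pi^{\textrm{finest}}](t_i)$ for every $t_i$ with $\mu(t_i)>0$. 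Zero-probability types are irrelevant to $\bar x$ and can be handled separately or assumed away. By Chebyshev and $\var(\varepsilon^n)\to0$, with finitely many types, this event has probability tending to one.

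\textbf{Step 3 (passing to the limit).} The map $\sigma\mapsto\bar x_i^\sigma$ is continuous, being polynomial in $\sigma$ by (\ref{eq: p-sigma})–(\ref{eq: x-bar-sigma}). Hence $E_\mu[\bar x^n\mid\Pi^{\textrm{finest}}](t_i)$ converges to the ABEE belief $z(t_i):=\sum_{t'\in\pi(t_i)}\mu(t'\mid\pi(t_i))\bar x_i^\sigma(t')$.

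Now suppose $\sigma_i(a_i\mid t_i)>0$. Then for large $n$, $\sigma^n_i(a_i\mid t_i)\ge c>0$. By (\ref{eq: strategy}), this means that with $\upsilon^n$-probability at least $c$, action $a_i$ is a best reply to some prediction $\hat x$. Since that probability stays bounded away from zero while the good event of Step 2 has probability tending to one, the two events intersect. So there exist predictions $\hat x_n\to z(t_i)$ to which $a_i$ is a best reply. Each such $a_i$ satisfies $u_i(a_i,t_i,\hat x_n)\ge u_i(a_i',t_i,\hat x_n)$ for all $a_i'$, and letting $n\to\infty$ and using continuity of $u_i$ yields the ABEE inequality of Definition \ref{def: ABEE}.

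\textbf{Main obstacle.} I expect the delicate point to be Step 2. The selected partition need not be the finest one: a coarser partition may win for some noise realizations. The argument has to show that, when noise is small, any partition that wins must predict almost the same values as the finest partition, and this must hold uniformly in $n$ even though $\bar x^n$ moves with $n$. The Pythagorean identity, combined with the uniform bound $C$ (which relies on boundedness of $f_i$ and finiteness of $\mathcal P_i$), is what I would rely on to get this uniformity.
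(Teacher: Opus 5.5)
Your proof is correct, but it is organized differently from the paper's. The paper (Lemma~\ref{lemma: belief-convergence}) first passes to the limit inside the selection problem. Since $L(\Pi,\bar x^{\sigma^n},\varepsilon)\to L(\Pi,\bar x^{\sigma},0)$ in probability for each of the finitely many $\Pi\in\mathcal P_i$, the selected partition lies, with probability tending to one, in the argmin of the noiseless limit problem. Lemma~\ref{lemma: non-decreasing-SSE} (splitting a cell weakly lowers within-cell squared error, with equality only when the sub-cell means coincide) then shows that $\Pi_i^{\textrm{finest}}$ is in that argmin and that every argmin partition induces the same beliefs. The ABEE inequality is finally obtained by contradiction, using a $\gamma$-neighborhood on which $a_i$ is $\delta$-suboptimal.

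You instead work at each finite $n$ with the current $\bar x^n$. Your Pythagorean identity is the quantitative form of Lemma~\ref{lemma: non-decreasing-SSE}: it turns near-optimality of any selected partition directly into an explicit bound on the distance between its prediction and the finest-partition prediction. The limit $\sigma$ then enters only through continuity of $\sigma\mapsto\bar x_i^\sigma$, and you conclude directly rather than by contradiction. Your route buys an explicit rate that is uniform in $n$ and does not depend on the gap between optimal and suboptimal partitions in the limit problem. It also shows that, even without convergence of $\sigma^n$, beliefs stay close to $E_\mu[\bar x^n\mid\Pi_i^{\textrm{finest}}]$. The paper's route is shorter and purely qualitative, needing only continuity and finiteness.

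Two small remarks. First, the cross term in your Step 1 vanishes identically. As in the proof of Proposition~\ref{prop: unbiased-noise}, $\bar\varepsilon(\pi'(\cdot))$ is constant on each cell, while $E_\mu[\bar x^n\mid\Pi']-\bar x^n$ averages to zero there. So the objective is exactly $B_n(\Pi')+\sum_t\mu(t)\bar\varepsilon(\pi'(t))^2$ plus a constant. This gives $B_n(\Pi)-B_n(\Pi_i^{\textrm{finest}})\le\max_t\varepsilon_t^2$ without any bound on $f_i$, so the uniformity issue you flag as the main obstacle does not actually arise. Second, to produce predictions $\hat x_n\to z(t_i)$ in Step 3, let $\delta=\delta_n\downarrow 0$ slowly enough that $\bbP(\max_t|\varepsilon_t|\le\delta_n)\to 1$. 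For example, $\delta_n=\var(\varepsilon^n)^{1/4}$ works by Chebyshev.
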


    Proposition \ref{prop: convergence-to-ABEE} states that an accumulation point of CVEs  as the noise of the training sample vanishes is an ABEE with respect to the finest partition. Consequently, if each player's finest partition is maximally fine (such that each type is in different cell), then CVE converges to a Nash equilibrium when the training sample noise vanishes.

    The key step in the proof of Proposition \ref{prop: convergence-to-ABEE} is to show that each player's belief about the sufficient statistic converges to the belief a player holds when \textit{(i)} she adopts the finest partition, and \textit{(ii)} there is no noise in the belief formation process. While intuitive, this property is not a priori obvious as, for any non-degenerate training sample noise, the partition chosen in a CVE need not coincide with the finest partition. Moreover, the partitions chosen by each player need not converge to the finest partition. In fact, such a convergence fails, in general.\footnote{See Corollary \ref{cor: partition-independent-types}.} Nonetheless, beliefs converge; see Lemma \ref{lemma: belief-convergence}. 
\bigskip


Next, we make an observation regarding which partitions are chosen in CVE. The next Proposition states that the chosen partition does not condition on irrelevant variables.

\begin{proposition}
    \label{prop: redundant-partition}
    Let $\Pi$ and $\Pi^\prime$ be two feasible partitions for player $i$, and assume that $\Pi^\prime$ refines $\Pi$. Suppose that $\sigma_{-i}$ is such that $\overline{x}_i^\sigma(\cdot)$ is measurable with respect to $\Pi$.\footnote{I.e., $\overline{x}_i^\sigma(\cdot)$ is constant on each cell of $\Pi$.} Then, the expected out-of-sample prediction error from partition $\Pi$ is weakly lower than from partition $\Pi^\prime$ for any realization of the training sample noise $\ve_i$.\\
\end{proposition}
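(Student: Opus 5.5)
First compute the expected out-of-sample error of an arbitrary partition in closed form. For a partition $\Pi$ and sample $\ve_i$, the validation noise $\eta_{t_i}$ has mean zero, independent across $t_i$ and of the training sample. Expanding the square therefore gives
\[
\bbE_\eta\sum_{t_i}\mu(t_i)[\hat x^{\ve_i}(\pi(t_i))-\bar x_i^\sigma(t_i)-\eta_{t_i}]^2=\sum_{t_i}\mu(t_i)[\hat x^{\ve_i}(\pi(t_i))-\bar x_i^\sigma(t_i)]^2+\sum_{t_i}\mu(t_i)\var(\eta_{t_i}).
\]
The last term is the same for every partition. So it suffices to compare the in-population squared deviation $L(\Pi)=\sum_{t_i}\mu(t_i)[\hat x^{\ve_i}(\pi(t_i))-\bar x_i^\sigma(t_i)]^2$ across $\Pi$ and $\Pi'$.

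Next use measurability. On a cell $M\in\Pi$, $\bar x_i^\sigma(t_i)=c_M$ is constant. Hence $\hat x^{\ve_i}(M)=c_M+\bar\ve_M$, where $\bar\ve_M=\sum_{t_i\in M}\mu(t_i\mid M)\ve_{t_i}$. The contribution of $M$ to $L(\Pi)$ is thus $\mu(M)\bar\ve_M^2$. Since $\Pi'$ refines $\Pi$, each $M$ is a disjoint union of cells $M'\in\Pi'$. On each such $M'$, $\bar x_i^\sigma$ equals the same constant $c_M$, so $\hat x^{\ve_i}(M')=c_M+\bar\ve_{M'}$. The contribution of the cells inside $M$ to $L(\Pi')$ is therefore $\sum_{M'\subseteq M}\mu(M')\bar\ve_{M'}^2$.

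The key step is then a Jensen (convexity) inequality cell by cell. We have $\bar\ve_M=\sum_{M'\subseteq M}\frac{\mu(M')}{\mu(M)}\bar\ve_{M'}$, a convex combination of the $\bar\ve_{M'}$. Convexity of the square gives $\mu(M)\bar\ve_M^2\le\sum_{M'\subseteq M}\mu(M')\bar\ve_{M'}^2$. Equivalently, the difference equals the within-$M$ weighted variance of the $\bar\ve_{M'}$, which is nonnegative. Summing over $M\in\Pi$ yields $L(\Pi)\le L(\Pi')$ for every realization $\ve_i$, which proves the claim.

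I do not expect a real obstacle. The only care needed is in the first step: verifying that the cross term $\bbE_\eta[\eta_{t_i}]$ vanishes. This uses that $\hat x^{\ve_i}$ depends only on the training sample, which is independent of $\eta$. I would also note that cells with $\mu(M')=0$ can be dropped, so that the conditional weights are well defined.
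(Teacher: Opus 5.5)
Your proof is correct and follows the paper's own argument: measurability makes $\bar x_i^\sigma$ cancel, so each partition's error is the weighted squared cell-average of the training noise plus a partition-independent validation term, and Jensen's inequality within each cell of $\Pi$ gives the weak inequality. Your version is simply more explicit than the paper's, which states the Jensen step tersely and concentrates on the equality case, where the finer partition induces the same beliefs.
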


\begin{corollary}
    \label{cor: partition-independent-types}
    Assume that for each player $i$, $\mathcal{P}_i$ includes the coarsest partition $\Pi_i^{\textrm{coarse}}=\{T_i\}$. 
    
    If types are independently distributed across players and players have private values (i.e., $f(\theta, a_{-i})$ does not depend on $\theta$), then every CVE is equivalent to a CVE in which each player is forced to choose the coarsest partition.\\
\end{corollary}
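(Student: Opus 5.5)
The plan is to show that, under independent types and private values, each player's target function $\overline{x}_i^\sigma(\cdot)$ is constant on $T_i$ for every strategy profile $\sigma$. Once that holds, Proposition \ref{prop: redundant-partition} gives the result almost immediately.

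\textbf{Step 1: the target is constant in own type.} Since $f_i$ does not depend on $\theta$, we have
\[
\overline{x}_i^\sigma(t_i)=\sum_{a_{-i}} p^\sigma(a_{-i}\mid t_i)\, f_i(a_{-i}).
\]
By \eqref{eq: p-sigma}, $p^\sigma(a_{-i}\mid t_i)=\sum_{t_{-i}}\mu(t_{-i}\mid t_i)\prod_{j\neq i}\sigma_j(a_j\mid t_j)$, where $\theta$ has been marginalized out. Independence of types gives $\mu(t_{-i}\mid t_i)=\mu(t_{-i})$. Hence $\overline{x}_i^\sigma(t_i)$ equals a constant $c_i^\sigma$ that does not depend on $t_i$. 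In particular, $\overline{x}_i^\sigma$ is measurable with respect to the coarsest partition $\{T_i\}$.

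\textbf{Step 2: the coarsest partition is optimal for every sample.} Every feasible partition refines $\{T_i\}$, and $\{T_i\}\in\mathcal{P}_i$ by assumption. Proposition \ref{prop: redundant-partition} then says that, for every training-sample realization $\varepsilon_i$, the coarse partition weakly minimizes the expected out-of-sample error. So $\{T_i\}$ always belongs to the argmin in \eqref{eq: optimal-partition}.

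\textbf{Step 3: any optimal partition yields the same belief on the relevant samples.} This is the step I expect to be the main obstacle. Other partitions may tie with the coarse one, and on such samples the CVE may mix over them. I need those ties to be behaviorally irrelevant.

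To show this, I would compute the error directly. Write $\bar\varepsilon_M$ for the $\mu$-weighted mean of the noise on a cell $M$. Expanding with $\bbE\eta=0$, the expected error of a partition $\Pi'$ is
\[
\sum_{M\in\Pi'}\mu(M)\,\bar\varepsilon_M^2+\text{const}.
\]
By Jensen's inequality (or the variance decomposition),
\[
\sum_{M\in\Pi'}\mu(M)\,\bar\varepsilon_M^2 \;\geq\; \Bigl(\sum_{M\in\Pi'}\mu(M)\,\bar\varepsilon_M\Bigr)^2=\bar\varepsilon_{T_i}^2,
\]
with equality iff $\bar\varepsilon_M$ is the same for all cells $M\in\Pi'$. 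So a partition ties with the coarse one exactly when every cell's sample average $\hat{x}^{\varepsilon_i}(M)=c_i^\sigma+\bar\varepsilon_M$ equals the coarse prediction $\hat{x}^{\varepsilon_i}(T_i)$.

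Because $u_i$ is linear in $x_i$, the best-reply condition \eqref{eq: optimal-action} depends only on this mean. Hence, for every type $t_i$, the set of best replies under any optimal $\Pi_i$ coincides with the set under $\{T_i\}$.

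\textbf{Step 4: construct the equivalent CVE.} Take any CVE $\sigma$ with mixtures $\beta_{\varepsilon_i}$. For each sample realization, replace every pair $(\Pi_i,g_i)$ in the support of $\beta_{\varepsilon_i}$ by $(\{T_i\},g_i)$. By Step 3, $g_i$ is still a best reply under the coarse partition. By Step 2, $\{T_i\}$ is optimal, and it remains optimal when $\mathcal{P}_i$ is restricted to $\{\{T_i\}\}$. Formula \eqref{eq: strategy} therefore produces the same $\sigma_i$.

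Since $\sigma$ is unchanged, $\overline{x}_i^\sigma$ is unchanged for every player, so compatibility is preserved for all players simultaneously. Thus $\sigma$ is also a CVE of the game in which each player is forced to choose the coarsest partition. Conversely, any CVE of the forced game is a CVE of the original game, again by Step 2. This establishes the equivalence.
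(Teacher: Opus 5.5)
Your proposal is correct and follows the paper's intended route. Independence plus private values make $\overline{x}_i^\sigma$ constant on $T_i$, so Proposition \ref{prop: redundant-partition} makes the coarsest partition weakly optimal for every sample, and the Jensen equality case from that proposition's proof (which you re-derive in Step 3) shows that any tying partition induces the same mean belief and hence the same best replies; your Step 4 simply makes explicit the construction the paper leaves implicit.
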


The intuition behind Proposition \ref{prop: redundant-partition} is the following. When the expected sufficient statistic $\overline{x}_i^\sigma$ is measurable with respect to $\Pi$, the prediction error arises solely from noise in the training sample. It is then never strictly optimal to choose a finer partition $\Pi^\prime$, as this would result in a higher variance by de-pooling the noise in the training sample. The proof is a straightforward application of Jensen's inequality.

Moreover, the following is also true: if the finer partition $\Pi^\prime$ is chosen, the beliefs it generates are the same as those generated by the partition $\Pi^\prime$. We remark that the finer partition $\Pi^\prime$ is chosen with probability $0$ if the training sample noise has an atomless distribution.

Proposition \ref{prop: redundant-partition} does \textit{not} imply that the out-of-sample-fit criterion selects a partition $\Pi$ such that the sufficient statistic $\overline{x}_i^\sigma(t_i)$ is measurable with respect to $\Pi$. To put it differently: even if $\overline{x}_i^\sigma(t_i)=\overline{x}_i^\sigma(t_i^\prime)$, the types $t_i$ and $t_i^\prime$ need not belong to the same cell. A counterexample is the following. Suppose that there are three types that occur with probability $1/3$. The vector of sufficient statistics is given by $\overline{x}_i^\sigma=(0, 0, 1)$ and the realized training sample noise is $\ve_i=(0, -1, 1)$. Then, the partition that uniquely minimizes the out-of-sample prediction error, $\Pi=\{\{t^1\}, \{t^2, t^3\}\}$, puts types $t^1$ and $t^2$ in different cells. Intuitively, the prediction error can be reduced by introducing some bias if this pools extreme realizations of the training sample noise.


We next make an observation that focuses on the partitions $\Pi_{i}(\varepsilon _{i})$ that are part of a compatible response to $\sigma_{-i}$ and $\varepsilon _{i}$. While the training sample noise has mean zero, it is not obvious a priori that the expected noise conditional on a particular selected partition is also zero. For example, in \citet{EliazSpiegler2019}, the crucial behavioral effects arise precisely because the noise distribution conditional on model selection can be biased. However, the following Proposition establishes that thanks to the validation sample noise and the validation criterion, this type of bias does not exist in our model.

\begin{proposition}
\label{prop: unbiased-noise}
    For every $t_{i}\in T_{i}$ and any partition $\Pi _{i}\in \mathcal{P}_{i}$, the expectation of $\varepsilon _{t_{i}}$ conditional on the event that $\Pi_{i}$ satisfies \eqref{eq: optimal-partition} is zero.\\
\end{proposition}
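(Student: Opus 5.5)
The plan is to show that the out-of-sample criterion in \eqref{eq: optimal-partition} depends on the training-sample noise $\varepsilon_i$ only through an \emph{even} function of $\varepsilon_i$. Combined with the symmetry of $\upsilon$, this makes the event ``$\Pi_i$ is optimal'' invariant under the sign flip $\varepsilon_i\mapsto-\varepsilon_i$, and the conditional mean of each $\varepsilon_{t_i}$ must then vanish.

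\emph{Step 1: integrate out the validation noise.} Since $\eta_{t_i}$ has mean zero and is independent of everything else, for every candidate partition $\Pi_i'$ the objective equals
\[
\sum_{t_i}\mu(t_i)\bigl[\hat{x}^{\varepsilon_i}(\pi'(t_i))-\bar{x}_i^\sigma(t_i)\bigr]^2+\sum_{t_i}\mu(t_i)\var(\eta_{t_i}).
\]
The second term does not depend on the partition, so it is irrelevant to the $\arg\min$.

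\emph{Step 2: bias--variance decomposition cell by cell.} For a cell $M$, write $c_M=\sum_{t\in M}\mu(t\mid M)\bar{x}_i^\sigma(t)$ and $e_M=\sum_{t\in M}\mu(t\mid M)\varepsilon_t$, so that $\hat{x}^{\varepsilon_i}(M)=c_M+e_M$. Expand the square as $\bigl[(c_M-\bar{x}_i^\sigma(t))+e_M\bigr]^2$. The cross term is $2e_M\sum_{t\in M}\mu(t)(c_M-\bar{x}_i^\sigma(t))$, and this equals zero by the definition of $c_M$. Hence the relevant objective is
\[
\sum_{M\in\Pi_i'}\sum_{t\in M}\mu(t)\bigl(\bar{x}_i^\sigma(t)-c_M\bigr)^2+\sum_{M\in\Pi_i'}\mu(M)\,e_M^2 .
\]
The first term does not involve $\varepsilon_i$. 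Each $e_M$ is linear in $\varepsilon_i$, so the second term is invariant under $\varepsilon_i\mapsto-\varepsilon_i$. Therefore, for each $\varepsilon_i$, the set of minimizing partitions at $\varepsilon_i$ coincides with the set at $-\varepsilon_i$.

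\emph{Step 3: symmetry argument.} The components $\varepsilon_{t_i}$ are i.i.d.\ and symmetric about zero, so the joint law of $\varepsilon_i$ is invariant under $\varepsilon_i\mapsto-\varepsilon_i$. Fix $\Pi_i$ and let $E$ be the event that $\Pi_i$ satisfies \eqref{eq: optimal-partition}. By Step 2, $E=-E$. Hence
\[
\bbE[\varepsilon_{t_i}\mathbf{1}_E]=\bbE[(-\varepsilon_{t_i})\mathbf{1}_{-E}]=-\bbE[\varepsilon_{t_i}\mathbf{1}_E],
\]
so $\bbE[\varepsilon_{t_i}\mathbf{1}_E]=0$. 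Dividing by $\Pr(E)$ whenever $\Pr(E)>0$ gives the claim. Integrability is automatic because $\upsilon$ has finite support. Ties pose no problem: the argument concerns the event that $\Pi_i$ is \emph{a} minimizer, and that event itself is sign-invariant.

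The only real content is Step 2, namely seeing that the cross term vanishes, so that the dependence on $\varepsilon_i$ is purely quadratic. This step is also the main obstacle. Without the expectation over $\eta$, the squared error against a realized validation sample would contain cross terms $e_M\eta_t$. More importantly, if predictions were compared with $\bar{x}_i^\sigma$ only through some non-quadratic or biased criterion, the invariance under sign flip would fail. The result therefore hinges on the validation-sample averaging and the squared-error loss.
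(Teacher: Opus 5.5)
Your proposal is correct and follows essentially the same route as the paper. The paper also expands each cell's expected squared error and observes that the cross term between the cell-average noise and the within-cell deviations of $\bar{x}_i^\sigma$ vanishes, so the loss depends on $\varepsilon_i$ only through the square of the cell-average noise and is invariant under $\varepsilon_i\mapsto-\varepsilon_i$; it then concludes via the symmetry of $\upsilon$, and your Step 3 just writes out that sign-flip step (including ties) explicitly.
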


We emphasize that this result is due to the partition being selected through cross-validation, via \textit{expected} out-of-sample predictive loss with a quadratic loss-function. For other loss functions, the result would fail. Moreover, the assumption that the training-sample noise is symmetrically distributed around $0$ is essential for this result.\bigskip

We conclude this section with an observation regarding partition choice, when only the maximally fine and the maximally coarse partitions are feasible. Let $\var(\ve_i)$ be the variance of the realized training sample, i.e., the variance of the random variable that takes the value $\ve_{t_i}$ with probability $\mu(t_i)$. Similarly, let $\var(\overline{x}_i^\sigma)$ be the variance of the average sufficient statistic.

\begin{proposition}
    \label{prop: fine-vs-coarse-partition}
        Assume the only feasible partitions are the maximally fine partition $\Pi^{\textrm{fine}}=\{\{t_i\}_{t_i\in T_i}\}$ and the maximally coarse partition $\Pi^{\textrm{coarse}}=\{ T_i\}$. Given a realization of the training sample $\ve_i$, the fine partition is adopted if
        \begin{align}
\var(\ve_i)<\var(\overline{x}^\sigma_i) \label{eq: fine-vs-coarse-partition}
        \end{align}
\end{proposition}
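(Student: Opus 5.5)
The plan is to compute the expected out-of-sample prediction error in \eqref{eq: optimal-partition} explicitly for each of the two feasible partitions, and then compare the two closed forms. Throughout, fix $\sigma_{-i}$, and hence $\bar{x}_i^\sigma$, together with a realization $\ve_i$. Write $\bbE_\mu$ and $\var$ for the mean and variance of a function of $t_i$ under $\mu$, so that $\var(\ve_i)$ and $\var(\bar{x}_i^\sigma)$ are exactly the objects in the statement. Let $V$ denote the variance of the validation noise $\eta_{t_i}$.

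First I would simplify a generic term of \eqref{eq: optimal-partition}. For any predictor $\hat{x}(t_i)$ that does not depend on $\eta$, expand
\[
\bbE_\eta[\hat{x}(t_i)-\bar{x}_i^\sigma(t_i)-\eta_{t_i}]^2 = [\hat{x}(t_i)-\bar{x}_i^\sigma(t_i)]^2 + V .
\]
The cross term vanishes because $\eta_{t_i}$ has mean zero. Summing with weights $\mu(t_i)$, the validation noise therefore contributes the same constant $V$ to every partition. The comparison reduces to the in-sample distance between the prediction and the true $\bar{x}_i^\sigma$.

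Next I would evaluate that distance for each partition.

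For the fine partition, \eqref{x hat} gives $\hat{x}^{\ve_i}(\{t_i\})=\bar{x}_i^\sigma(t_i)+\ve_{t_i}$. The loss is therefore
\[
\bbE_\mu[\ve_{t_i}^2]+V = \var(\ve_i) + (\bbE_\mu \ve_{t_i})^2 + V .
\]

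For the coarse partition, the prediction is the constant $\hat{x}^{\ve_i}(T_i)=\bbE_\mu \bar{x}_i^\sigma + \bbE_\mu \ve_{t_i}$. The loss is $\bbE_\mu[(\bbE_\mu\bar{x}_i^\sigma - \bar{x}_i^\sigma(t_i)) + \bbE_\mu\ve_{t_i}]^2 + V$. The first summand inside the square has $\mu$-mean zero, so the cross term drops out and the loss equals
\[
\var(\bar{x}_i^\sigma) + (\bbE_\mu\ve_{t_i})^2 + V .
\]

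Subtracting the two expressions, the common terms $(\bbE_\mu\ve_{t_i})^2+V$ cancel. The fine partition thus has strictly lower expected out-of-sample error exactly when $\var(\ve_i)<\var(\bar{x}_i^\sigma)$. In that case it is the unique minimizer in \eqref{eq: optimal-partition}, and hence it is the partition adopted. This establishes \eqref{eq: fine-vs-coarse-partition}.

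There is no real obstacle in this argument. The one point to get right is the bookkeeping of the realized-sample mean $\bbE_\mu\ve_{t_i}$, which in general is not zero for a given realization. The proof works because this mean enters both losses identically, as the squared bias of the pooled noise. Tracking that term carefully is the step I would double-check.
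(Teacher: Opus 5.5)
Your proof is correct and follows the same route as the paper: you compute the expected out-of-sample error of each of the two feasible partitions, with the validation noise contributing the common constant $V$, and compare the two expressions. The paper leaves the comparison as ``simple algebra''; you carry it out by using $\bbE_\mu[\ve_{t_i}^2]=\var(\ve_i)+(\bbE_\mu\ve_{t_i})^2$ and noting that the same term $(\bbE_\mu\ve_{t_i})^2$ appears in the coarse-partition loss, so the difference of the two losses is exactly $\var(\ve_i)-\var(\overline{x}_i^\sigma)$.
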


Proposition \ref{prop: fine-vs-coarse-partition} provides a tractable partition-selection criterion: if the variance of the realized training sample is less than the variance of the sufficient statistic, the fully fine partition is favored over the fully coarse partition.

\section{Applications}
\label{sec: applications}

In this section we apply CVE to a variety of games, and illustrate the
qualitative effects that the equilibrium concept generates.

\subsection{Quadratic-Payoff Games}
\label{sec: quadratic-games}

When players' action set is the set of real numbers and their payoffs are quadratic in their own action, CVE becomes especially tractable, because each player $i$'s best-reply becomes linear in her point estimate of $x_{i}$.

Consider a two-person game with complete information and an arbitrarily large (finite) set of states, in which players' action set is $\mathbb{R}$, $t_{1}=t_{2}=\theta $, and the payoff function takes the quadratic form:
\begin{equation}
u_{i}(t_{i},a_{i},x_{i})=a_{i}(\gamma \theta +\beta x_{i})-\frac{1}{2}a_{i}^{2}  \label{eq: quadratic-u}
\end{equation}%
where $x_{i}=a_{j}$. Normalize $Var(\theta )=1$. The only feasible partitions are the fully coarse and fully fine ones. Finally, assume $\varepsilon _{\theta }\sim N(0,\sigma ^{2})$.

Clearly, the symmetry of the game and the linearity of players' best-reply functions implies that the game's unique Nash equilibrium is symmetric.

We proceed to characterize CVE. First, when the mean of player $i$'s belief over $x_{i}$ in state $\theta $ is $\hat{x}_{i}(\theta )$, her best-reply is
\begin{equation}
a_{i}=\gamma \theta +\beta \hat{x}_{i}(\theta )  \label{eq: quadratic-best-reply}
\end{equation}
Consider some partition that player $i$'s ML agent adopts, such that $\pi(\theta )$ is the partition cell that includes $\theta $. By definition,%
\[\hat{x}_{i}(\theta )=\sum_{\theta ^{\prime }\in \pi (\theta )}\mu (\theta^{\prime })(\bar{x}_{i}(\theta ^{\prime })+\varepsilon _{\theta ^{\prime }}). 
\]
By Proposition \ref{prop: unbiased-noise}, the expectation of $\varepsilon _{\theta ^{\prime }}$ conditional on $\theta^{\prime }\in \pi (\theta )$ is $zero$, regardless of the partition that was selected.

In CVE, the expectation of player $i$'s best-reply in $\theta $ (which itself is given by \eqref{eq: quadratic-best-reply}) coincides with $\bar{x}%
_{j}(\theta )$. Denote 
\[
x^{\ast }_{i}=\sum_{\theta ^{\prime }}\mu (\theta ^{\prime })\bar{x}_{i}(\theta^{\prime })\qquad \qquad \theta ^{\ast }=\sum_{\theta ^{\prime }}\mu (\theta^{\prime })\theta^{\prime} 
\]
Let $\lambda_{i}$ be the equilibrium probability with which player $i$'s ML agent adopts the fine partition. Then, since $\bar{x}_{i}(\theta)$ coincides with the expectation of $a_{j}$ in state $\theta$, we can use (\ref{eq: quadratic-best-reply}) to obtain  
\[
\bar{x}_{i}(\theta )=\gamma \theta +\beta \left[ \lambda_{j} \bar{x}_{j}(\theta)+(1-\lambda_{j} )x^{\ast }_{j}\right]. 
\]%
Taking expectations of both sides of this equation w.r.t $\mu$ for both $i=1,2$, we obtain $x^{\ast
}_{i}=\gamma \theta ^{\ast }/(1-\beta )$. It follows that
\begin{equation}
\bar{x}_{i}(\theta )=\frac{\gamma(1+\lambda_{j}\beta)}{1-\lambda_{i}\lambda_{j}\beta^{2}} \theta+\kappa\theta^{\ast},
\label{eq: quadratic-x}
\end{equation}%
where $\kappa$ is a coefficient that involves $\beta$, $\lambda_{i}$, and $\lambda_{j}$.

Using the normalization $Var(\theta )=1$, we obtain

\begin{equation}
Var(\bar{x}_{i})=\left( \frac{\gamma(1+\lambda_{j}\beta) }{1-\lambda_{i}\lambda_{j}\beta^{2}}\right)^{2} \label{eq: var x}
\end{equation}

By Proposition \ref{prop: fine-vs-coarse-partition}, $\lambda_{i} =\bbP \left\{ Var(\varepsilon )\leq Var(\bar{x}_{i})\right\} $. Plugging (\ref{eq: var x}) into this equation gives us a pair of equations for $\lambda_{1}$ and $\lambda_{2}$.

When we restrict attention to symmetric CVE, the characterization is reduced to a single equation for $\lambda$:\footnote{When $\theta $ is uniformly distributed, the R.H.S. of this equation can be
conveniently written in terms of the $cdf$ of a $\chi^{2}$ distribution with $\left\vert \Theta \right\vert $ degrees of freedom.}
\begin{equation}
\lambda_{i} =\bbP \left\{ Var(\varepsilon )\leq \left( \frac{\gamma }{1-\beta
\lambda }\right) ^{2}\right\}  \label{eq: quadratic-lambda}
\end{equation}

Let us now apply this characterization to two special cases of interest.\bigskip

\noindent \textit{Strategic substitutability}

\noindent Suppose $\beta <0$. In this case, the game exhibits \textit{strategic substitutability}. The standard Cournot competition with linear demand (and allowing prices to get negative values) is a familiar instance. In this case, equation (\ref{eq: quadratic-lambda}) has a \textit{unique interior solution}. The reason is that the R.H.S. is continuously decreasing in $\lambda $ when $\beta <0$; the R.H.S. is above the L.H.S. at $\lambda =0$; and the R.H.S. is below the L.H.S. at $\lambda =1$.

Moreover, when the variance $\sigma^{2}$ of the training sample noise goes up, the R.H.S. of \eqref{eq: quadratic-lambda} decreases, and therefore so does the solution $\lambda $. This means that the coarse partition is more likely to be adopted in symmetric CVE when the training sample is noisier. As a result, while the ex-ante expected equilibrium action $x^{\ast }$ is invariant to $\sigma ^{2}$, its \textit{volatility} $Var(\bar{x})$ increases with $\sigma^{2}$ when $\beta <0$, because $Var(\bar{x})$ is decreasing in $\lambda $ --- which itself is decreasing in $\sigma ^{2}$.

Thus, when the game exhibits strategic substitutability, symmetric-equilibrium behavior is more volatile relative to Nash equilibrium. In the Cournot-competition case, this means that equilibrium prices are more volatile than in Nash equilibrium. The intuition is that when a player's belief is based on the coarse partition, she underestimates the opponent's reaction to the fluctuations in $\theta $; strategic substituability then implies that the player's own reaction to $\theta $ is amplified. This effect increases $\sigma ^{2}$, through its effect on the equilibrium probablity of adopting the coarse partition.

Unlike Nash equilibrium, for some parameter values there exist \textit{asymmetric} CVE. For example, for $\beta =-0.85$ and $\gamma =1.5\sigma$, (\ref{eq: var x}) has the asymmetric solutions $(0.473,0.938)$ and $(0.938,0.473)$ for $(\lambda_1, \lambda_2)$, in addition to the unique symmetric solution $\lambda=0.794$. The intuition behind the possibility of asymmetric CVE is as follows. Suppose  player $1$'s strategy has high volatility while player $2$'s strategy has low volatility. Then, player $2$ is likely to adopt the fine partition while player $1$ is likely to adopt the coarse partition. As a result, player $2$'s mean belief is relatively responsive to the state while player $1$'s belief is relatively unresponsive. Strategic substituability then implies that player $2$'s best-reply is relatively unresponsive to the state while player $1$'s best-reply is relatively responsive. This is consistent with the initial guess regarding the volatility of the two players' behavior.

\bigskip

\noindent \textit{Team effort}

\noindent We now examine an example of a game that exhibits strategic complementarities. Consider a team of two agents whose effort choices affect
a project's outcome. If the project is successful, each agent receives a payoff of $1$, whereas an unsuccessful project yields a payoff of $0$. Agents choose effort $a_{i}\in \mathbb{R}$ simultaneously after commonly observing a uniformly distributed state of Nature $\theta \in \{\theta _{1},\theta_{2}\}$. Let
\[
u_{i}(a_{i},a_{j},\theta )=\delta \theta (a_{i}+a_{j})+(1-\delta )a_{i}a_{j}-\frac{1}{2}a_{i}^{2}.
\]
The first two terms represent the probability that the project is successful, where $\delta \in (0,1)$ measures the relative importance of complementarities in effort. The third term represents the cost of effort. The unique Nash equilibrium in state $\theta $ is $a_{1}=a_{2}=\theta $, yielding a common ex-ante payoff of $\frac{1}{2}(\frac{1}{2}+\delta )(\theta_{1}^{2}+\theta _{2}^{2})$.

We now turn to symmetric CVE. Let $X=\mathbb{R}$, and identify $x_{i}$ with $a_{j}$ --- i.e., each agent's ML agent aims to predict the opponent's
action. Note that the expression for $u_{i}$ does not literally fit into  \eqref{eq: quadratic-u}, because it contains an additional term that is linear in $a_{j}$. However, this term matters for welfare analysis and interpretation only; it is behaviorally irrelevant, and therefore we can apply our previous characterization.

Because there are only two states, $Var(\varepsilon )\leq Var(\bar{x})$ if and only if $(\Delta \varepsilon )^{2}<(\Delta \bar{x}_{i})^{2}$, where $\Delta \bar{x}=\bar{x}(\theta _{2})-\bar{x}(\theta _{1})$ and $\Delta\varepsilon =\varepsilon _{\theta _{1}}-\varepsilon _{\theta _{2}}$.
Equation \eqref{eq: quadratic-x} becomes
\[
\bar{x}(\theta )=\delta \theta +(1-\delta )\left[ \lambda \bar{x}(\theta
)+(1-\lambda )\frac{\bar{x}(\theta _{1})+\bar{x}(\theta _{2}))}{2}\right] 
\]%
such that
\[
\Delta \bar{x}=\frac{\delta \Delta (\theta )}{1-\lambda (1-\delta )} 
\]%
Consequently, equation \eqref{eq: quadratic-lambda} becomes%
\[
\lambda =\bbP \left( \left\vert \Delta \varepsilon \right\vert <\frac{\delta
\Delta \theta }{1-\lambda (1-\delta )}\right) 
\]%
Using the fact that $\Delta \varepsilon \sim N(0,2\sigma _{\varepsilon}^{2}) $, we can rewrite this equation as follows:
\begin{equation}
\lambda =2\Phi \left( \frac{\delta \Delta \theta }{\sqrt{2}\sigma_{\varepsilon }[1-\lambda (1-\delta )]}\right) -1
\label{eq: team-equation-normal}
\end{equation}
where $\Phi (\cdot )$ is the \textit{cdf} of the standard normal distribution.

A solution to \eqref{eq: team-equation-normal} always exists. However, in general, the solution is not unique; see Figure \ref{fig: team-effort}. The reason such multiplicity can arise is that the R.H.S. of (\ref{eq: team-equation-normal}) is increasing in $\lambda $. The intuition behind this feature is that if ML agents adopt the fine partition more frequently, team members' equilibrium beliefs tend to become less coarse and more volatile. The team effort game's strategic complementarities then imply that agents' best-reply also becomes more volatile, which implies that the fine partition is more likely to prevail.

Let us explore \eqref{eq: team-equation-normal} from additional angles. First, fix $\delta $. When $\sigma _{\varepsilon }^{2}\rightarrow \infty $, $\lambda \rightarrow 0$, such that $\Delta \bar{x}\rightarrow \delta \Delta\theta $. Conversely, when $\sigma _{\varepsilon }^{2}\rightarrow 0$, $\lambda \rightarrow 1$, such that $\Delta \bar{x}\rightarrow \Delta \theta $. Second, fix $\sigma _{\varepsilon }>0$. As $\delta \rightarrow 0$, $\lambda \rightarrow 0$ --- i.e., agents' ML agents almost always adopt the coarse partition. And since $\delta $ is small, agents' equilibrium behavior is arbitrarily rigid --- i.e., $x(\theta )\approx (\theta _{1}+\theta_{2})/2 $ for every $\theta $ --- and when $\sigma _{\varepsilon }$ is small, their equilibrium payoff in each state is approximately $(\theta _{1}+\theta _{2})^{2}/8$. Compare this with the Nash equilibrium payoff $(\theta _{1}+\theta _{2})^{2}/4$ in the same limit. In this sense, CVE induces a substantial deviation from Nash equilibrium for any training sample noise variance.\footnote{The order of limits clearly matters here. For any given $\delta $, if we send $\sigma _{\varepsilon }^{2}$ to zero, agents almost always select the fine partition and their equilibrium behavior converges to the Nash benchmark.} In general, agents' expected payoff in CVE is below the Nash equilibrium level.

\begin{figure}
    \begin{tikzpicture}[scale=10]
        \draw[->] (0, 0) -- (1.1, 0);
        \draw[->] (0, -0.3) -- (0, 0.3);
        \node[below] at (1.1, 0) {$\lambda$};
        \node[left] at (0, 0.3) {$f(\lambda)$};
        \node[left] at (0, 0) {$0)$};
        \draw (1, -0.01)--(1, 0.01);
        \node[below] at (1, 0) {$1$};

     \draw[blue]  (0., 0.273661)--(0.02, 0.258471)--(0.04, 0.243448)--(0.06, 0.228602)--(0.08, 0.213942)--(0.1, 0.199478)--(0.12, 0.18522)--(0.14, 0.171181)--(0.16, 0.157371)--(0.18, 0.143805)--(0.2, 0.130496)--(0.22, 0.11746)--(0.24, 0.104712)--(0.26, 0.0922708)--(0.28, 0.0801542)--(0.3, 0.0683829)--(0.32, 0.0569788)--(0.34, 0.0459654)--(0.36, 0.0353683)--(0.38, 0.0252151)--(0.4, 0.0155354)--(0.42, 0.00636147)--(0.44, 
-0.00227195)--(0.46, -0.0103274)--(0.48, -0.0177644)--(0.5, -0.0245394)--(0.52, -0.0306058)--(0.54, -0.0359133)--(0.56, -0.0404088)--(0.58, -0.0440358)--(0.6, -0.046735)--(0.62, -0.0484451)--(0.64, -0.0491038)--(0.66, -0.0486496)--(0.68, -0.047024)--(0.7, -0.0441762)--(0.72, -0.040068)--(0.74, -0.0346828)--(0.76, -0.0280368)--(0.78, -0.0201962)--(0.8, -0.0112995)--(0.82, -0.0015892)--(0.84, 0.00854933)--(0.86, 0.0185395)--(0.88, 0.0275653)--(0.9, 0.0345402)--(0.92, 0.0381374)--(0.94, 0.0369574)--(0.96, 0.0299333)--(0.98, 0.016984)--(1., -0.000465258)--(1.02, -0.0200197)--(1.04, -0.04)--(1.06, -0.06)--(1.08, -0.08)--(1.1, -0.1);  
    \node[blue] at (1, 0.2) {\footnotesize{$\delta=0.1, \Delta=3.5$}};
    \draw[red] (0., 0.235823)--(0.02, 0.220014)--(0.04, 0.204354)--(0.06, 0.188851)--(0.08, 0.173514)--(0.1, 0.158351)--(0.12, 0.143373)--(0.14, 0.128589)--(0.16, 0.114012)--(0.18, 0.0996531)--(0.2, 0.0855257)--(0.22, 0.0716439)--(0.24, 0.058023)--(0.26, 0.0446794)--(0.28, 0.0316311)--(0.3, 0.0188973)--(0.32, 0.00649925)--(0.34, -0.00554035)--(0.36, -0.0171966)--(0.38, -0.0284424)--(0.4, -0.0392483)--(0.42, -0.0495821)--(0.44, -0.0594086)--(0.46, -0.0686892)--(0.48, -0.077382)--(0.5, -0.0854409)--(0.52, -0.0928155)--(0.54, -0.0994507)--(0.56, -0.105286)--(0.58, -0.110256)--(0.6, -0.114289)--(0.62, -0.117307)--(0.64, -0.119227)--(0.66, -0.119958)--(0.68, -0.119406)--(0.7, -0.117474)--(0.72, -0.114063)--(0.74, -0.109077)--(0.76, -0.102434)--(0.78, -0.0940735)--(0.8, -0.0839768)--(0.82, -0.0721933)--(0.84, -0.0588813)--(0.86, -0.0443653)--(0.88, -0.0292156)--(0.9, -0.0143481)--(0.92, -0.00112665)--(0.94, 0.00859125)--(0.96, 0.0126078)--(0.98, 0.00898974)--(1., -0.0026998)--(1.02, -0.0202537)--(1.04, -0.0400028)--(1.06, -0.06)--(1.08, -0.08)--(1.1, -0.1);
    \node[red] at (1, 0.15) {\footnotesize{$\delta=0.1, \Delta=3.0$}};
\end{tikzpicture}

\begin{tikzpicture}[scale=10]
        \draw[->] (0, 0) -- (1.1, 0);
        \draw[->] (0, -0.3) -- (0, 0.3);
        \node[below] at (1.1, 0) {$\lambda$};
        \node[left] at (0, 0.3) {$f(\lambda)$};
        \node[left] at (0, 0) {$0$};
        \draw (1, -0.01)--(1, 0.01);
        \node[below] at (1, 0) {$1$};

    \draw[orange] (0., 0.193544)--(0.02, 0.177137)--(0.04, 0.160865)--(0.06,  0.144734)--(0.08, 0.128754)--(0.1, 0.112933)--(0.12,  0.0972804)--(0.14, 0.0818065)--(0.16, 0.0665223)--(0.18, 0.05144)--(0.2, 0.0365724)--(0.22, 0.0219337)--(0.24,  0.00753937)--(0.26, -0.00659396)--(0.28, -0.020448)--(0.3, -0.0340028)--(0.32, -0.0472364)--(0.34, -0.0601249)--(0.36, -0.0726419)--(0.38, -0.0847584)--(0.4, -0.0964424)--(0.42, -0.107658)--(0.44, -0.118367)--(0.46, -0.128526)--(0.48, -0.138086)--(0.5, -0.146993)--(0.52, -0.155187)--(0.54, -0.162602)--(0.56, -0.169163)--(0.58, -0.174785)--(0.6, -0.179374)--(0.62, -0.182826)--(0.64, -0.185021)--(0.66, -0.185829)--(0.68, -0.185101)--(0.7, -0.182676)--(0.72, -0.178375)--(0.74, -0.172008)--(0.76, -0.163376)--(0.78, -0.152282)--(0.8, -0.138551)--(0.82, -0.122066)--(0.84, -0.102823)--(0.86, -0.0810367)--(0.88, -0.0572988)--(0.9, -0.0328216)--(0.905, -0.0268137)--(0.91, -0.0209328)--(0.915,-0.0152292)--(0.92, -0.00975795)--(0.925, -0.00457959)--(0.93, 0.00024087)--(0.935, 0.00463446)--(0.94, 0.00852946)--(0.945, 0.0118532)--(0.95, 0.0145346)--(0.955, 0.0165072)--(0.96, 0.0177134)--(0.965, 0.0181093)--(0.97, 0.0176699)--(0.975, 0.0163946)--(0.98, 0.0143119)--(0.985, 0.0114817)--(0.99, 0.00799529)--(0.995, 0.0039692)--(1.0, -0.000465258)--(1.005, -0.00517751)--(1.01, -0.0100543)--(1.015, -0.0150124)--(1.02, -0.0200019)--(1.025, -0.0250002)--(1.03, -0.03);
    \node[orange] at (1, 0.2) {\footnotesize{$\delta=0.07, \Delta=3.5$}};

    \draw[green] (0., 0.166332)--(0.02, 0.149437)--(0.04, 0.13266)--(0.06, 0.116006)--(0.08, 0.0994828)--(0.1, 0.0830989)--(0.12, 0.0668622)--(0.14, 0.0507819)--(0.16, 0.0348677)--(0.18, 0.0191305)--(0.2, 0.00358172)--(0.22, -0.011766)--(0.24, -0.0268988)--(0.26, -0.0418018)--(0.28, -0.0564585)--(0.3, -0.0708508)--(0.32, -0.0849592)--(0.34, -0.0987617)--(0.36, -0.112234)--(0.38, -0.125351)--(0.4, -0.138082)--(0.42, -0.150394)--(0.44, -0.162252)--(0.46, -0.173616)--(0.48, -0.184439)--(0.5, -0.194672)--(0.52, -0.204257)--(0.54, -0.21313)--(0.56, -0.221219)--(0.58, -0.228442)--(0.6, -0.234706)--(0.62, -0.239905)--(0.64, -0.243918)--(0.66, -0.246607)--(0.68, -0.247815)--(0.7, -0.247361)--(0.72, -0.245042)--(0.74, -0.240624)--(0.76, -0.233846)--(0.78, -0.224421)--(0.8, -0.212038)--(0.82, -0.196382)--(0.84, -0.177166)--(0.86, -0.154201)--(0.88, -0.127523)--(0.9, -0.0976264)--(0.905, -0.0897813)--(0.91, -0.0818455)--(0.915, -0.0738579)--(0.92, -0.0658649)--(0.925, -0.0579207)--(0.93, -0.050088)--(0.935, -0.0424389)--(0.94, -0.0350547)--(0.945, -0.0280267)--(0.95, -0.021455)--(0.955, -0.0154476)--(0.96, -0.010118)--(0.965, -0.00558141)--(0.97, -0.00194917)--(0.975, 0.000678406)--(0.98, 0.00222169)--(0.985, 0.00263277)--(0.99, 0.00190715)--(0.995, 0.0000937872)--(1., -0.0026998)--(1.005, -0.00631143)--(1.01, -0.0105409)--(1.015, -0.0151792)--(1.02, -0.020044)--(1.025, -0.0250071)--(1.03, -0.0300006);
     \node[green] at (1, 0.15) {\footnotesize{$\delta=0.07, \Delta=3.0$}};
    \end{tikzpicture}
        \centering
    \caption{\small{This figure plots $f(\lambda)=2 \Phi \left(\frac{\delta (\theta_2-\theta_1) }{\sqrt{2} \sigma_{\varepsilon}
[1-\lambda (1-\delta )]} \right) -\lambda - 1$ for different values of $\delta$ and $\Delta=\frac{ \theta_2-\theta_1 }{\sqrt{2} \sigma_{\varepsilon}}$. A root of $f(\lambda)$ corresponds to a solution to Equation \eqref{eq: team-equation-normal}. Note that there are three roots in $(0, 1)$ for each parameter specification.}}
    \label{fig: team-effort}
\end{figure}

\subsection{Unanimity Voting}

The following is a pedagogically convenient specification of the familiar jury model (\citet{FeddersenPesendorfer1998}). There are two players who vote on a reform. Voting is unanimous so that the reform is adopted if and only if both players vote for it. The voters have aligned preferences and receive a state-dependent payoff from the reform: this payoff is $1$ in state $\tht_1$ and $-c<0$ in state $\tht_2$. The payoff is $0$ in both states if the reform is not adopted. The two states are equally likely.

Before voting, players receive a signal $t_i\in \{g, b\}$ that is informative of the state. Signals are independent across voters conditional on the state. Assume $\bbP[t_i=g |\tht =\tht_1]=\bbP[t_i=b |\tht =\tht_2]= q \in (1/2, 1)$. 

Denote voting for the reform by $Y$ and voting against by $N$.
The sufficient statistic of player $i$ is $f(\tht_1, a_j)= \ind\{a_j=Y\}$ and $f(\tht_2, a_j)= -c  \ind\{a_j=Y\}$.
Player $i$'s payoff function is then
\[u_i(t_i, Y, x)=x \quad u_i(t_i, N, x)=0. \]
Thus, the sufficient statistic is the payoff from voting for reform.

Denote by $\sigma_i(t_i)$ the probability that a player votes $Y$ after signal $t_i$. Then, the expected sufficient statistic is
\begin{align*}
    \overline{x}(g)=(q^2-c(1-q)^2) \sigma_j(g) + q(1-q)(1-c)\sigma_j(b)\\
    \overline{x}(b)=(q^2-c(1-q)^2) \sigma_j(b) + q(1-q)(1-c)\sigma_j(g)
\end{align*}

Assume that $\frac{1-q}{q}<c<1$. Under this parametric restriction, there are two symmetric Nash equilibria: $\sigma(g)=\sigma(b)=0$ and $\sigma(g)=1, \sigma(b)=q(1-q)(1-c)/(cq^2-(1-q)^2)$.

Let us explore symmetric CVE, under the assumption that the training sample noise is distributed uniformly on $\{-d, d\}$.

First, we argue that $\sigma(g)=\sigma(b)=0$ is not a CVE for any positive level of noise. To see this, observe that a player votes in favour of the reform if her belief about the sufficient statistic is positive. For $\sigma(g)=\sigma(b)=0$, $\overline{x}(g)=\overline{x}(b)=0$. Consequently, a player votes $Y$ if the noise realization is positive. This rules out the uninformative, always-vote-against Nash equilibrium.

Second, we argue that the probability of voting $Y$ can be \textit{higher} in CVE than in the efficient Nash equilibrium. For ease of exposition, assume $q=2/3$ and $c=5/8$. This parameter selection implies that in the efficient Nash equilibrium, $\sigma(b)=1/2$. Let $d=0.22$. The following is the unique symmetric CVE: $\sigma(g)=1, \sigma(b)=8 d - 7/6$. Note that $1/2<\sigma(b)<3/4$. Given these strategies, $\bar{x}(g)-\bar{x}(b)=2 d$. Thus,  the fine partition is chosen when 
\[(\ve_g, \ve_b)\in \{(d, d), (-d, -d), (d, -d)\}, \]
and the coarse partition is chosen when $(\ve_g, \ve_b)=(d, -d)$. At the realisation $(\ve_g, \ve_b)=(-d, d)$, the fine and the coarse partition achieve the same out-of-sample prediction error. The coarse partition is chosen with probability $32 d - 20/3$.
Players' best-responses are to vote $Y$ after the $g$ signal, irrespective of the chosen partition. When the fine partition is chosen, a player votes $Y$ after the $b$ signal if and only if $\ve_b=d$; for the noise realization $\ve_b=-d$, the player votes $N$ after the $b$ signal when the fine partition is chosen.
When the coarse partition is chosen, a player votes $Y$ after the $b$ signal as well.

Thus, although one could expect that the training sample noise would make it harder for voters to coordinate on unanimous support for reform, the endogenous model selection ends up pushing them to vote for reform more frequently than in the efficient Nash equilibrium. Indeed, endogenous model selection plays a key role in this effect. If players were forced to use the fine partition, the unique CVE would be $\sigma(g)=1, \sigma(b)=1/2$. In contrast, in the CVE we constructed, players vote $Y$ when the coarse partition is selected.

\subsection{Speculative Bets}

Consider a two player speculative betting game. There are two commonly known states of the world, $\tht_1=1$ and $\tht_2=-1$. Both states are ex-ante equally likely. 
Player $i$ can either accept the bet $a_i=A$ or decline $a_i=D$. The bet is executed if and only if both players accept. 
The payout to player $1$ if the bet is accepted equals the state $\tht$; player 2 gets the negative.

The sufficient statistic is the probability $x_i$ with which player $j$ accepts the bet, i.e., $x_i=\ind\{a_j=A\}$. That is, $x_i$ is the position player $j$ takes in the bet. We do not restrict $x_i\in [0, 1]$ so as to allow players to adopt short and long positions.

The payoffs then are $u_1(\tht, A, x_1)=\tht x_1$ and $u_1(\tht, D, x_1)=0$ and $u_2(\tht, A, x_2)=- \tht x_2$ and $u_2(\tht, D, x_2)=0$.

Player 1 has access only to the fine partition $\Pi^f=\{\{\tht_1\},\{\tht_2\}\}$.
Player 2 has access to the fine partition and the coarse partition $\Pi^c=\{\{\tht_1,\tht_2\}\}$.
Assume the training sample noise $\ve_\tht$ is distributed uniformly on $\{-d, d\}$ for a $d>0$.

When the noise $d$ is between $1/2$ and $3/4$, the unique CVE has
\begin{align*}
    \sigma_1(A|\tht_1)=\frac{1}{2} & \sigma_1(A|\tht_2)=0\\
     \sigma_1(A|\tht_2)=\frac{1}{4} & \sigma_1(A|\tht_2)=\frac{3}{4}.
\end{align*}

The corresponding payoffs for player 1 are $u_1(\tht_1)=1/8$ and $u_1(\tht_2)=0$, and player 2 gets the negative. Player 1 thus receives a higher payoff than player 2 and a higher payoff than in Nash equilibrium.

We now argue that the proposed strategies constitute a CVE.\footnote{The proof of uniqueness is straightforward but tedious.}
Since Player 1 always uses the fine partition, she plays $A$ in state $\tht_1$ iff
$\sigma_2(A\mid\tht_1)+\ve_{\tht_1}\geq 0$,
and plays $A$ in state $\tht_2$ iff
$\sigma_2(A\mid\tht_2)+\ve_{\tht_2}\leq 0$. Since $1/2<d<3/4$, this implies  $\sigma_1(A|\tht_1)=1/2$ and $\sigma_1(A|\tht_2)=0$.

With two equally likely states, player 2's fine partition is selected over the coarse partition iff
$|\ve_{\tht_1}-\ve_{\tht_2}|<\left|\sigma_1(A\mid\tht_1)-\sigma_1(A\mid\tht_2)\right|$.
Since  $d>\frac12$, then the coarse partition is selected whenever the two noise realisations have opposite signs. Player 2 selects the fine partition exactly when $\ve_{\tht_1}=\ve_{\tht_2}$, and selects the coarse partition when $\ve_{\tht_1}\neq\ve_{\tht_2}$.
When the noise realisation is $(\ve_{\tht_1},\ve_{\tht_2})=(d,d)$, the fine partition is selected, and player 2  plays $D$ in state $\tht_1$ and $A$ in state $\tht_2$.
When the noise realisation is  $(\ve_{\tht_1},\ve_{\tht_2})=(-d,-d)$, the fine partition is selected, and player 2 plays $A$ in state $\tht_1$ and $D$ in state $\tht_2$.
For the realisation $(\ve_{\tht_1},\ve_{\tht_2})=(d,-d)$, the coarse partition is selected. The pooled estimate is
$\frac12\left(\frac12+d-d\right)=\frac14>0$,
hence Player 2 plays $D$ in state $\tht_1$ and $A$ in state $\tht_2$.
The same holds for the realisation $(\ve_{\tht_1},\ve_{\tht_2})=(-d, d)$.
Thus Player 2 plays $A$ in state $\tht_1$ only in the realisation $(-d,-d)$, which has probability $1/4$.

When the noise satisfies $d>\frac34$, the unique CVE has
\begin{align*}
    \sigma_1(A|\tht_1)=\frac{1}{2} & \sigma_1(A|\tht_2)=\frac{1}{2}\\
     \sigma_1(A|\tht_2)=\frac{1}{4} & \sigma_1(A|\tht_2)=\frac{3}{4}.
\end{align*}

The corresponding payoffs for player 1 are $u_1(\tht_1)=1/8$ and $u_1(\tht_2)=-3/8$, and player 2 gets the negative. In expectation over the state, player 1 thus receives a strictly lower payoff than player 2 and a lower payoff than in Nash equilibrium.

Player 1's estimate in state $\tht_1$ is
$\frac14+\ve_{\tht_1}$, hence Player 1 plays $A$ in state $\tht_1$ exactly when $\ve_{\tht_1}=d$. Therefore $\sigma_1(A\mid\tht_1)=\frac12$. In state $\tht_2$, Player 1's estimate is
$\frac34+\ve_{\tht_2}$, so player 1 plays $A$ in state $\tht_2$ exactly when $\ve_{\tht_2}=-d$.

Player 2's statistic is constant across states. Thus the fine partition is never strictly better than the coarse partition. If the noise realisations have opposite signs, the coarse partition is strictly better. If they have the same sign, the two partitions generate the same estimate, so either partition is optimal.

Again consider the four equally likely noise realisations.
For the realisation $(\ve_{\tht_1},\ve_{\tht_2})=(d,d)$, player 2 plays $D$ in state $\tht_1$ and $A$ in state $\tht_2$.
For the realisation  $(\ve_{\tht_1},\ve_{\tht_2})=(-d,-d)$, player 2 plays $A$ in state $\tht_1$ and $D$ in state $\tht_2$.

For the realisations   $(\ve_{\tht_1},\ve_{\tht_2})=(d,-d)$ and $(\ve_{\tht_1},\ve_{\tht_2})=(-d,d)$, the coarse partition is selected. The average sufficient statistic is
\[\frac12\left(\frac12+d+\frac12-d\right)=\frac12>0.\]
Thus Player 2 plays $D$ in state $\tht_1$ and $A$ in state $\tht_2$.
Thus Player 2 plays $A$ in state $\tht_1$ only in the realisation $(-d,-d)$, which has probability $1/4$.

The example shows that access only to the correct fine partition need not benefit a player, even in a zero-sum setting. At high noise levels, player 1 obtains a lower payoff than player 2. Player 2’s ability to choose the coarse partition therefore raises her payoff in this case. The intuition is as follows. Player $i$ accepts the bet in the state that is unfavourable to her, $\theta_j$, only if she believes that the other player is taking a short position. When beliefs are very noisy, this occurs with positive probability and leads to a negative payoff. By contrast, when player 2 chooses the coarse partition---which happens when the noise realizations have opposite signs in the two states---the coarse partition averages out the noise. Her belief is then correct, and she chooses the weakly dominant action in each state. The example shows that this effect can be strong enough to generate non-zero payoffs.

\subsection{An Illusion of Control}

We use the final example in this section to explore a natural extension of the framework, and a notable effect that this extension generates.

Consider a binary-action, two-person game with a commonly observed binary state. Let $a_{i}\in \{0,1\}$. Identify $x_{i}$ with the probability that $a_{j}=1$. Let $u_{i}(\theta ,a_{i},x_{i})=2\theta x_{i}-a_{i}$, where $\theta \sim U\{0,1\}$. Under rationality, $a_{i}=0$ is strictly dominant.

We extend the main model by allowing players to form beliefs that are based on partitions of $\Theta \times A$ --- whereas in the main model, only
partitions of $\Theta $ are allowed. Specifically, we allow for three possible partitions of all four pairs $(\theta ,a)$: The maximally fine partition, and two coarse partitions: $\{\{(0,0),(0,1)\},\{(1,0),(1,1)\}\}$ and $\{\{(0,0),(1,0)\},\{(0,1),(1,1)\}\}$. The former coarse partition is correct, because  player $j$'s action is purely a function of $\theta $. The latter coarse partition is wrong, because it falsely attributes $a_{j}$ to $a_{i}$. By Proposition \ref{prop: redundant-partition}, we can ignore the maximally fine partition because it is a refinement of the correct coarse partition. Therefore, in symmetric CVE, players' ML agents will only mix between the two coarse partition.

Since players' ML agents may attribute outcomes to both states and actions, player $i$'s training sample noise is associated with pairs $(\theta ,a_{i})$. Suppose $\varepsilon _{\theta a}\sim U\{-\frac{1}{2},\frac{1}{2}\}$.

In CVE, $a_{i}=0$ with certainty when $\theta =0$, independently of the partition player $i$'s ML agent adopts. Let $\alpha $ denote the probability that $a_{i}=1$ at $\theta =1$ in CVE. When player $i$'s ML agent adopts the correct coarse partition, she chooses the rational action $a_{i}=0$, because she does not think that $a_{i}$ affects $a_{j}$. It follows that the player can only play $a_{i}=1$ with positive probability when $\theta =1$, provided that her ML agent adopts the wrong coarse partition.

Let us derive the equilibrium partition selection rule. The expected out-of-sample prediction error induced by the correct coarse partition is%
\[
L_{\theta }=\frac{1}{2}\left( \frac{\varepsilon _{11}+\varepsilon _{10}}{2}\right) ^{2}+\frac{1}{2}\left( \frac{\varepsilon _{01}+\varepsilon _{00}}{2}\right) ^{2} 
\]
To derive the expected out-of-sample prediction error induced by the wrong coarse partition, note first that $a_{i}=1$  with probability $\frac{1}{2}\alpha $ and $a_i=0$ with probability $1-\frac{1}{2}\alpha $. Let us now calculate the joint equilibrium distribution over $\theta$ and $x_{i}$ conditional on any given $a_{i}$:
\[
\bbP (\theta ,x_{i}\mid a_{i}=1)=\left\{
\begin{array}{ccc}
\alpha & for & (\theta ,x_{i})=(1,1) \\ 
1-\alpha & for & (\theta ,x_{i})=(1,0)%
\end{array}%
\right. 
\]%
and
\[
\bbP (\theta ,x_{i}\mid a_{i}=0)=\left\{ 
\begin{array}{ccc}
\frac{\frac{1}{2}(1-\alpha )}{1-\frac{1}{2}\alpha }\alpha & for & (\theta
,x_{i})=(1,1) \\ 
\frac{\frac{1}{2}(1-\alpha )}{1-\frac{1}{2}\alpha }(1-\alpha ) & for & 
(\theta ,x_{i})=(1,0) \\ 
\frac{\frac{1}{2}}{1-\frac{1}{2}\alpha } & for & (\theta ,x_{i})=(0,0)%
\end{array}%
\right. 
\]
The expected out-of-sample prediction error of the wrong coarse partition is thus
\[
L_{a}=\frac{\alpha }{2}\varepsilon _{11}^{2}+\frac{1-\alpha }{2}\left(\varepsilon _{10}-\frac{\alpha (1-\alpha )}{2-\alpha }\right) ^{2}+\frac{1}{2}\left( \varepsilon _{00}-\frac{\alpha (1-\alpha )}{2-\alpha }\right) ^{2} 
\]

We can now calculate $\alpha $: It is the probability that the following inequalities hold jointly: $L_{a}<L_{\theta }$ and $2(\alpha +\varepsilon
_{11})-1>0$. The logic behind the latter inequality is that conditional on $a_{i}=1$, $\theta =1$ and $x_{i}=\alpha $; and so when player $i$ adopts the wrong coarse partition, her estimate of $x_{i}$ at $a=1$ is $\alpha+\varepsilon _{11}$.

It can be checked that the configuration $\varepsilon _{\theta a}=\frac{1}{2}$ for all $(\tht  a)$ (which occurs with probability $\frac{1}{16}$) together with $\alpha=1/16$ is the only one that satisfies both inequalities. Therefore, this specification constitutes a CVE. We have thus found a CVE in which players choose the dominated action $a=1$ with positive probability. This is an \textquotedblleft illusion of control\textquotedblright\ effect: Although $a_{i}$ has no objective causal effect on $a_{i}$, players sometimes choose the costly action $a=1$ because the model their ML agent adopts sometimes attributes a successful outcome to this action.

\section{Related Literature}
\label{sec: lit}


\noindent Our main contribution is to the literature on how ML plays out in strategic interactions, where a key feature is that the training data is endogenous. While we incorporate ML into belief equilibrium formation and assume that decisions are made by humans in response to ML-generated beliefs, most of the literature in economics focuses on reinforcement learning. \citet{CalvanoEtAl2020}, \citet{HansenPai2021}, \citet{AskerEtAl2024}, among many others, numerically study reinforcement learning in oligopoly games.   \citet{BanchioMantegazza2024},  \citet{Dolgopolov2024}, \citet{Possnig2024}, and \citet{CarteaEtAl2025} analytically characterize long-run outcomes in such environments.
 \citet{Waizmann2025} studied an interaction between a long-run player who obeys Q-learning and a sequence of rational short-run players.

More recently, \citet{JehielWeber2025}, \citet{JehielMohlin2026} and \citet{Spiegler2026} focused on ML-based predictive models that serve human players in games. All three use partitions of a set of contingencies to conceptualize predictive models. Moreover, in contrast to our model, these three papers assume that the data which determines model selection is noiseless. In \citet{JehielWeber2025}, the partition size is fixed, and the equilibrium belief formation model obeys a form of k-means clustering. Their \emph{clustered distributional analogy-based expectations equilibrium} obtains as a special case of our cross-validation equilibrium for a particular choice of feasible models and noiseless training samples. 

The model of \citet{JehielMohlin2026}, applied to static Bayesian games, differs from ours in two important aspects. First, they impose an exogenous feedback constraint, which implies that certain situations cannot be distinguished. Second, their model-selection criterion relies on an exogenous notion of similarity between situations. Neither feature is present in our framework. Under perfect feedback, their categorization equilibrium coincides with Nash equilibrium.
In \citet{Spiegler2026}, the equilibrium partition trades off complexity and expected predictive accuracy: a player incurs a physical cost that increases in the number of cells of the chosen partitions. In contrast, in the present model, there is no explicit cost of complex models. Rather, simple models can sometimes prevail because they improve out-of-sample predictive accuracy, sacrificing bias for lowering variance.

At a broader level, this paper contributes to the literature on game-theoretic equilibrium concepts that incorporate non-rational belief formation; see \citet{Jehiel2005}, \citet{EysterRabin2005}, \citet{Esponda2008}, \citet{EspondaPouzo2016}, \citet{Spiegler2016}, and \citet{clyde2025proxy}. In each of these cases, equilibrium behavior involves subjective optimization against a belief that arises from fitting (in some sense) a subjective model to the equilibrium distribution. 

A key feature that distinguishes CVE from these works is that agents' subjective models in CVE are randomly and endogenously determined, following a procedure that involves taking samples from the equilibrium distribution. In this respect, CVE is related to equilibrium concepts in which agents best-reply to endogenously drawn samples; see \citet{OsborneRubinstein1998}, \citet{SalantCherry2020}, and \citet{DanenbergSpiegler2025}. These papers, however, feature no element of ML-based model selection.
\bigskip

\appendix

\section{Proofs}
\label{sec: proofs}

In the proofs, it will be convenient to let $h_i(t_i, \ve_i, \Pi_i)\in \Delta(A_i)$ denote a player $i$'s mixed strategy when her type is $t_i$, the realized training sample is $\ve_i$  and her ML agent adopts the partition $\Pi_i$. Moreover, we let $\lambda_i(\ve_i, \Pi_i)$ denote the probability with which player $i$'s ML agent adopts the partition $\Pi_i$ given $\ve_i$. Clearly, given $\sigma_{-i}$, if $\lambda_i(\Pi, {\ve}_i)$ is supported on partitions satisfying \eqref{eq: optimal-partition} and $h_i(t_i, {\ve_i}, \Pi)$ is supported on actions satisfying \eqref{eq: optimal-action}, one can find a distribution $\beta_{{\ve}_i}$ over pairs $(\Pi_i, g_i)$ that are compatible with $\sigma_{-i}$ and $\ve_i$ according to Definition \ref{def: compatible}, and vice versa.

\subsection{Proof of Proposition \ref{prop: existence}}
\label{proof: existence}

For each player $i$ and vector of realized type-dependent shocks $\ve_i=\left(\varepsilon_{t_{i}}\right)$,  let $\lambda_{i}\left(\varepsilon_{i}\right),  \in \Delta \mathcal{P}_{i}$ be a distribution over the set of feasible models. Denote $\lambda_{i}=\lambda_{i}\left(\varepsilon_{i},\right)_{\varepsilon_ {{i}}}$ and $\lambda=\left(\lambda_{i}\right)_{i}$.\\
Let $\sigma_{i}\left(\cdot \mid t_{i}\right) \in \Delta A_{i}$ be a mixed strategy for type $t_{i}$, $\sigma_{i}=\left(\sigma_{i}\left(\cdot \mid t_{i}\right)\right)_{t_{i}}$ and $\sigma=\left(\sigma_{i}\right)_{i}$.

Construct a correspondence

\begin{align*}
\Psi: &  \bigtimes_{i=1}^{N}\left(\Delta \mathcal{P}_{i}\right)^{|\mathrm{supp} \varepsilon| \times |T_i| } \times\bigtimes_{i=1}^{N}\left(\Delta A_{i}\right)^{\left|T_{i}\right|}  \rightrightarrows \bigtimes_{i=1}^{N}\left(\Delta \mathcal{P}_{i}\right)^{|\mathrm{supp} \varepsilon| \times\left|T_{i}\right| } \times\bigtimes_{i=1}^{N}\left(\Delta A_{i}\right)^{\left|T_{i}\right|} \\
& (\lambda, \sigma) \mapsto \Psi(\lambda, \sigma)
\end{align*}

The goal is to define $\Psi$ such that, if $\left(\lambda^{*}, \sigma^{*}\right) \in \Psi\left(\lambda^{*}, \sigma^{*}\right)$ is a fixed point of this correspondence, $\sigma^{*}$ is a CVE.\\


Given $\sigma$, for any i and $t_{i}$ define $p^{\sigma}(\theta, t_i, a)$ as in Equation \eqref{eq: p-sigma}
and $\bar{x}_{i}^{\sigma}\left(t_{i}\right)$ as in Equation \eqref{eq: x-bar-sigma}.
$\bar{x}_{i}^{\sigma}\left(t_{i}\right)$ is a continuous function of $\sigma$, for any fixed $t_{i}$.
For each $t_{i}, \varepsilon_{i}=\left(\varepsilon_{t_{i}}\right) \in \operatorname{supp} \varepsilon_{i}^{|T i|}$, and $T_i\supset M_{i}=\pi(t_i) \in \Pi_i \in \mathcal{P}_{i}$
define

\[
\hat{x}^{\varepsilon_{i}, \sigma}(M_{i})=\sum_{t_{i} \in M_{i}}  \mu(t_{i}| t_{i} \in M)\left(\bar{x}_{i}^{\sigma}\left(t_{i}\right)+\varepsilon_{t_{i}}\right)
\]

$\hat{x}^{\ve_i, \sigma}(M_i)$ is a continuous function of $\sigma$.

Denote by $L\left(\Pi_{i},  \varepsilon_{i}\right)(\sigma)$ the squared prediction error of a partition $\Pi_i\in \mathcal{P}_i$ given a vector of shocks $\ve_i$
\[L_{i}\left(\Pi, \varepsilon_{i}\right)(\sigma)=\mathbb{E}_{\eta}\left[\sum_{t_{i} \in T_{i}} \mu\left(t_{i}\right)\left[\bar{x}^{\varepsilon_{i}, \sigma}(\pi(t_i))-\bar{x}_{i}^{\sigma}\left(t_{i}\right)-\eta_{t_{i}}\right]^{2}\right].\]
Note that for fixed $\Pi,\ve_{i}, L\left(\Pi,  \varepsilon_{i}\right)(\sigma)$ is a continuous function of $\sigma$.

Define $\psi_{1}\left(i, \varepsilon_{i}, \right)(\sigma)$ to be

\[
\psi_{1}\left(i,  \varepsilon_{i}\right)(\sigma)=\left\{\lambda_{i}^{\prime}(\varepsilon_{i}) \in \Delta \mathcal{P}_{i}\middle| \lambda_{i}^{\prime}\left(\varepsilon_{i}\right)\left(\Pi_{i}\right)>0  \Rightarrow \Pi_{i} \in \argmin_{\tilde{\Pi_{i}} \in \mathcal{P}_{i}}  L_{i}\left(\tilde{\Pi_i}, \varepsilon_{i}\right)(\sigma)
\right\}.
\]

Note that the continuity of $L_{i}\left(\Pi,  \varepsilon_{i}\right)(\cdot)$ and the finiteness of $\mathcal{P}_{i} $ imply that $\psi_{1}\left(i,  \varepsilon_{i}\right)(\cdot)$ is an upper hemi-continuous, non-empty -, convex - and compact-valued correspondence.

Define $\Psi_{1}(\lambda, \sigma)=\left(\psi_{1}\left(i,  \varepsilon_{i}\right)(\sigma)\right)_{\varepsilon_{i},  i}.$
Clearly, $\Psi_{1}(\lambda, \sigma)$ is an upper hemi-continuous, non-empty-, convex-and compact-valued correspondence.\\

Now we define $\Psi_{2}(\lambda, \sigma) \in \bigtimes_{i}\left(\Delta A_{i}\right)^{\left|T_{i}\right|}$.
Fix $\Pi_{i} \in \mathcal{P}_{i}, \varepsilon_{i} \in(\operatorname{supp} \varepsilon)^{\left|T_{i}\right|}$ and $t_{i}$.
Define $\hat{p}^{\varepsilon_{i}}(y \mid M_i)(\sigma)$ as in Equation \eqref{eq: empirical-distribution}.



Define $H_{2}\left(i, t_i, \ve_{i}, M\right)(\sigma)$ as

\begin{align*}
\left\{
h_{i}\left(t_{i}, \ve_{i}, M\right) \in \Delta\left(A_{i}\right) \mid h_{i}\left(t_{i}, \varepsilon_{i}, M\right)(a_i)>0  \Rightarrow a_{i} \in \arg \max _{a_{i}^{\prime} \in A_{i}} \sum_{y \in X_{i}} \hat{p}^{\ve_{i}}(y \mid M)(\sigma) u_{i}\left(a^{\prime}, t_{i}, y\right)
\right\}
\end{align*}

\begin{claim}
For all fixed i, $\varepsilon_{i}, t_{i}, M_{i}, H_{2}\left(i, \varepsilon_{i}, t_{i}, M_{i}\right)(\sigma)$ is a non-empty-, convex- and compact-valued and upper-hemicontinuous correspondence.
\end{claim}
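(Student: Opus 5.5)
The plan is to recognise $H_2(i,t_i,\ve_i,M)(\cdot)$ as the mixed best-reply correspondence of a decision problem whose payoff depends continuously on $\sigma$, and then apply Berge's maximum theorem.

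\textbf{Step 1: rewrite the objective.} By linearity of $u_i$ in $x_i$ (as noted after Definition \ref{def: compatible}), for each $a_i^\prime$
\[
\sum_{y\in X_i}\hat p^{\ve_i}(y\mid M)(\sigma)\,u_i(a_i^\prime,t_i,y)=u_i\big(a_i^\prime,t_i,\hat x^{\ve_i,\sigma}(M)\big).
\]
This identity matters because $\hat p^{\ve_i}(\cdot\mid M)(\sigma)$ itself need not be continuous in $\sigma$: its support points $\bar x_i^\sigma(t_i)+\ve_{t_i}$ move with $\sigma$, and the indicator in \eqref{eq: empirical-distribution} jumps when two support points coincide. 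The sum is nevertheless well defined, because the indicator in \eqref{eq: empirical-distribution} places mass only on finitely many points.

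\textbf{Step 2: continuity of the objective.} The map $\sigma\mapsto\hat x^{\ve_i,\sigma}(M)$ is continuous, as already observed in the proof. The utility $u_i$ is continuous. Hence $\phi(a_i^\prime,\sigma):=u_i(a_i^\prime,t_i,\hat x^{\ve_i,\sigma}(M))$ is continuous in $\sigma$ for each $a_i^\prime$. Since $A_i$ is finite, $\phi$ is jointly continuous on $A_i\times\bigtimes_j(\Delta A_j)^{|T_j|}$.

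\textbf{Step 3: properties of the pure best-reply set.} Let $B(\sigma)=\argmax_{a_i^\prime\in A_i}\phi(a_i^\prime,\sigma)$. Because $A_i$ is finite, $B(\sigma)$ is nonempty. By Berge's maximum theorem, or directly from finiteness of $A_i$ together with continuity, $B$ is upper hemicontinuous. Concretely, if $a_i\notin B(\sigma)$, then some $a^\prime$ beats $a_i$ strictly at $\sigma$, and by continuity it still beats $a_i$ in a neighbourhood of $\sigma$. So $B$ has closed graph.

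\textbf{Step 4: pass to mixtures.} $H_2(\sigma)=\Delta(B(\sigma))$, the set of distributions on $A_i$ supported in $B(\sigma)$. This set is nonempty, convex and compact, being a face of the simplex $\Delta(A_i)$. For upper hemicontinuity, take $\sigma^n\to\sigma$ and $h^n\in H_2(\sigma^n)$ with $h^n\to h$. If $h(a_i)>0$, then $h^n(a_i)>0$ for all large $n$, so $a_i\in B(\sigma^n)$ for all large $n$. By Step 3, $a_i\in B(\sigma)$. Therefore $h\in H_2(\sigma)$, so the graph is closed. Since the range $\Delta(A_i)$ is compact, closed graph yields upper hemicontinuity.

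The main obstacle is the discontinuity of $\hat p$ identified in Step 1. Working with the posterior mean $\hat x$, which is exactly what linearity of $u_i$ permits, removes this difficulty. The rest of the argument is routine.
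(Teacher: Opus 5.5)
Your proof is correct, but it takes a different route from the paper's.

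\textbf{The paper's route.} The paper does not use linearity of $u_i$. It proves upper hemicontinuity by passing to a subsequence along which $\hat p^{\varepsilon_i}(\cdot\mid M)(\sigma^n)$ has a fixed number of support points $x^n_m$ with constant weights $p_m$, and notes that $x^n_m\to x_m$. It then uses only continuity of $u_i$ in $x$ to conclude that the limit mixture is supported on the maximizers of $\sum_m p_m u_i(a_i',t_i,x_m)$.

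\textbf{Your route.} You use linearity to collapse the objective to $u_i(a_i',t_i,\hat x^{\varepsilon_i,\sigma}(M))$, which is continuous in $\sigma$. You then run a standard closed-graph argument on the finite action set.

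\textbf{Comparison.} Your version is shorter and avoids the subsequence bookkeeping over support points that merge or split. It also spells out a step the paper only asserts: why a limit of mixtures over argmax sets is supported on the limiting argmax. Your Step 3 together with the support argument in Step 4 covers this. The paper's version has the advantage of not depending on linearity.

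\textbf{Linearity is not actually needed.} Re-index the sum by types:
$\sum_{y}\hat p^{\varepsilon_i}(y\mid M)(\sigma)\,u_i(a_i',t_i,y)=\sum_{t_i'\in M}\mu(t_i'\mid t_i'\in M)\,u_i(a_i',t_i,\bar x_i^\sigma(t_i')+\varepsilon_{t_i'})$.
This shows the objective is continuous in $\sigma$ using only continuity of $u_i$ and of $\bar x_i^\sigma$. Your Steps 2--4 then go through verbatim without linearity. So the discontinuity of $\hat p$ you flag in Step 1 is an artifact of indexing by outcome values rather than by types, not a genuine obstacle.
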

\begin{proof}
Only the claim that the correspondence is uhc needs proof. Take $\sigma^{n} \rightarrow \sigma, H\left(i, t_{i}, \varepsilon_{i}, M\right)\left(\sigma^{n}\right) \ni \gamma^{n} \rightarrow \gamma$.
Need to show that $\gamma \in H\left(i, t_{i}, \varepsilon_{i}, M\right)(\sigma)$.
Note that for each $n, \hat{p}^{\varepsilon}(y \mid M)\left(\sigma^{n}\right)$ is positive for at most $|M|$ many points $y \in X_{i}$. Assume wlog that for each $\left.n, \hat{p}^{n}(y \mid M) C_{0^{n}}\right)$ is positive for exactly $|M|$-many points y (if not, pass to subsequences such that $\hat{p}^{\varepsilon}(y \mid M)\left(\sigma^{n}\right)$ is positive for exactly $k$-many, $k \leq|M|$, points, and repeat the argument for each $k$ ). Note that there exists $|M|$ sequences $\left(x_{m}^{n}\right)$ such that $\hat{p}^{\varepsilon}\left(x_{m}^{n} \mid M\right)\left(\sigma^{n}\right)=\hat{p}^{\varepsilon}\left(x_{m}^{n+1} \mid M\right)\left(\sigma^{n+1}\right)=: p_{m}$ for all $n \in \mathbb{N}, m=1, \ldots,|M|$. Moreover, as $\sigma^{n} \rightarrow \sigma$, $x_{m}^{n} \rightarrow x_{m}$ for each $m$.

Hence, one can identify $\gamma^{n}$ with a probability distribution over 
\[\argmax _{a_i' \in A_i} \sum_{m=1}^M p_{m} u_{i}\left(a_{i}^{\prime}, t_{i}, x_{m}^{n}\right).\]

Because $u_{i}\left(a_{i}, t_{i}, x\right)$ is continuous in $x$, the limit $\gamma$ of $\gamma^{n}$ puts probability 1 on

\[
\argmax _{a^{\prime} \in A_{i}} \sum_{m=1}^{|M|} p_{m} u_{i}\left(a_{i}', t_{i}, x_{m}\right)
\]

which implies $\gamma \in H_{2}\left(i, \ve_{i}, t_{i}, M_{i}\right)(\sigma)$.
\end{proof}

Now define

\begin{align*}
\Psi_{2}(i, t_{i})(\lambda, \sigma)
&=\Bigl\{
\sigma_i(\cdot|t_i)' \in \Delta(A_{i}) \Bigm|
\exists\, h_{i}(t_{i},\varepsilon_i, M_{i}) \in H_{2}\!\left(i,\varepsilon_{i}, t_{i}, M_{i}\right)(\sigma)
\ \text{such that, for all } a_i \in A_i,\\
&\qquad \sigma_i'(a_i|t_i)
= \sum_{\varepsilon_i \in \operatorname{supp}(\varepsilon_i)\times |T_i|} v(\varepsilon_i)
  \sum_{\Pi_{i} \in \mathcal{P}_{i}} \lambda_{i}(\varepsilon_{i})(\Pi_{i})
   h_{i}\!\bigl(t_i', \varepsilon_i, \pi_{i}(t_i')\bigr)(a_i)
\Bigr\}.
\end{align*}

\begin{claim}
    The correspondence $\Psi$ admits a fixed point $(\lambda^*, \sigma^*)\in \Psi(\lambda^*, \sigma^*)$.
\end{claim}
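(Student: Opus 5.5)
The plan is to apply Kakutani's fixed-point theorem to $\Psi=(\Psi_1,\Psi_2)$, where $\Psi_2(\lambda,\sigma)=(\Psi_2(i,t_i)(\lambda,\sigma))_{i,t_i}$. The steps, in order, are as follows.

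\emph{Domain.} The domain $\bigtimes_i(\Delta\mathcal{P}_i)^{|\mathrm{supp}\,\varepsilon|\times|T_i|}\times\bigtimes_i(\Delta A_i)^{|T_i|}$ is a finite product of simplices. This uses finiteness of $A_i$, $T_i$, $\mathcal{P}_i$ and of the support of the training-sample noise, so the domain is a nonempty, compact, convex subset of a Euclidean space.

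\emph{The $\lambda$-component.} $\Psi_1$ was already shown to be nonempty-, convex- and compact-valued and upper hemicontinuous. It remains to verify the same properties for $\Psi_2$.

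\emph{Nonemptiness and convexity of $\Psi_2(i,t_i)(\lambda,\sigma)$.} By the Claim on $H_2$, each $H_2(i,\varepsilon_i,t_i,M)(\sigma)$ is nonempty and convex, since it is the set of mixtures over a nonempty argmax in a finite action set. The map from a selection $(h_i(t_i,\varepsilon_i,M))_{\varepsilon_i,M}$ to
\[\sigma_i'(\cdot\mid t_i)=\sum_{\varepsilon_i}\upsilon(\varepsilon_i)\sum_{\Pi_i}\lambda_i(\varepsilon_i)(\Pi_i)\,h_i(t_i,\varepsilon_i,\pi_i(t_i))\]
is linear for fixed $\lambda$. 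So $\Psi_2(i,t_i)(\lambda,\sigma)$ is the image of a finite product of nonempty convex compact sets under a linear map, and is therefore nonempty, convex and compact.

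\emph{Closed graph of $\Psi_2$.} Since the range is compact, upper hemicontinuity is equivalent to a closed graph, which I would prove directly. Take $(\lambda^n,\sigma^n)\to(\lambda,\sigma)$ and $\sigma'^n\in\Psi_2(\lambda^n,\sigma^n)$ with $\sigma'^n\to\sigma'$. Choose witnessing selections $h^n$. They lie in a compact set, so pass to a convergent subsequence $h^n\to h$. The Claim (uhc of $H_2$) gives $h(t_i,\varepsilon_i,M)\in H_2(i,\varepsilon_i,t_i,M)(\sigma)$. The defining expression is jointly continuous in $(\lambda,h)$, being bilinear over a finite sum, so $\sigma'$ is generated by $(\lambda,h)$ and lies in $\Psi_2(\lambda,\sigma)$.

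\emph{Fixed point.} Products of such correspondences inherit all these properties, so Kakutani yields $(\lambda^*,\sigma^*)\in\Psi(\lambda^*,\sigma^*)$.

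Finally, I would note why this fixed point is a CVE, which is the purpose of the construction. $\lambda^*_i(\varepsilon_i)$ is supported on partitions satisfying \eqref{eq: optimal-partition} given $\sigma^*_{-i}$. The $h$'s are supported on actions satisfying \eqref{eq: optimal-action}. Using the remark at the beginning of the Appendix, one assembles $\beta_{\varepsilon_i}$ over compatible pairs $(\Pi_i,g_i)$ as products of $\lambda^*_i(\varepsilon_i)$ and independent type-wise mixtures $h$, and these reproduce \eqref{eq: strategy}.

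The main obstacle is the closed-graph step. The subtlety is that the witnessing $h^n$ depend on $n$ and are not unique, so one must extract a convergent subsequence and invoke the uhc Claim for $H_2$. That Claim itself rests on continuity of $\hat{p}^{\varepsilon_i}$'s support points in $\sigma$ and on continuity of $u_i$. The one care point is that $\lambda$ enters only linearly and does not depend on $\sigma$ inside $\Psi_2$, so the joint convergence causes no trouble.
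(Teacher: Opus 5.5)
Your proposal is correct and follows the paper's argument: apply Kakutani's theorem to $\Psi$, taking $\Psi_1$'s properties as established and deriving upper hemicontinuity of $\Psi_2$ from the upper hemicontinuity Claim for $H_2$. You add details the paper leaves implicit: that the domain is compact and convex, that $\Psi_2(i,t_i)$ is convex as a linear image of a product of convex sets, and the closed-graph argument via a convergent subsequence of the witnessing selections $h^n$.
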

\begin{proof}
$\Psi$ is non-empty-, compact- and convex-valued. Moreover, $\Psi_1$ is upper hemi-continuous. $\Psi_2$ is upper-hemi-continuous as  well: since $H_2(i, \ve_i, t_i, M_i)$ is upper hemi-continuous by the previous Claim, $\Psi_2(i, t_i)$ is upper hemi-continuous as well. Consequently, $\Psi$ admits a fixed point by Kakutani's Fixed Point Theorem.
\end{proof}

\begin{claim}
    Let $(\lambda^*, \sigma^*)\in \Psi(\lambda^*, \sigma^*)$ be a fixed point of the correspondence $\Psi$. Then $\sigma^*$ is a CVE.
\end{claim}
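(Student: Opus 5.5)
The plan is to unpack the two components of the fixed point and match them to Definitions \ref{def: compatible} and \ref{def: CVE}. Fix a player $i$ and a training sample realization $\varepsilon_i$.

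First, take the $\Psi_1$-component. Since $\lambda^*_i(\varepsilon_i)\in\psi_1(i,\varepsilon_i)(\sigma^*)$, every partition $\Pi_i$ with $\lambda^*_i(\varepsilon_i)(\Pi_i)>0$ minimizes $L_i(\cdot,\varepsilon_i)(\sigma^*)$. This loss is exactly the objective in \eqref{eq: optimal-partition}, evaluated at $\bar x_i^{\sigma^*}$ and hence at $\sigma^*_{-i}$. So each such $\Pi_i$ satisfies the first condition of compatibility with $\sigma^*_{-i}$ and $\varepsilon_i$.

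Second, take the $\Psi_2$-component. The fixed-point property yields selections $h_i(t_i,\varepsilon_i,M)\in H_2(i,\varepsilon_i,t_i,M)(\sigma^*)$ satisfying
\[
\sigma^*_i(a_i\mid t_i)=\sum_{\varepsilon_i}\upsilon(\varepsilon_i)\sum_{\Pi_i}\lambda^*_i(\varepsilon_i)(\Pi_i)\,h_i(t_i,\varepsilon_i,\pi(t_i))(a_i).
\]
Each $h_i(t_i,\varepsilon_i,\pi(t_i))$ is supported on maximizers of \eqref{eq: optimal-action} for the cell $\pi(t_i)$.

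The main step is to turn these objects into a mixture $\beta_{\varepsilon_i}$ over pure pairs $(\Pi_i,g_i)$. For each $\Pi_i$ in the support of $\lambda^*_i(\varepsilon_i)$, define a product distribution over pure strategies $g_i$ by drawing $g_i(t_i)$ independently across $t_i$ according to $h_i(t_i,\varepsilon_i,\pi(t_i))$. Then set
\[
\beta_{\varepsilon_i}(\Pi_i,g_i)=\lambda^*_i(\varepsilon_i)(\Pi_i)\prod_{t_i}h_i(t_i,\varepsilon_i,\pi(t_i))(g_i(t_i)).
\]
Every pair in the support of $\beta_{\varepsilon_i}$ is compatible, because $\Pi_i$ is optimal and every $g_i(t_i)$ is a best reply to the belief $\hat p^{\varepsilon_i}(\cdot\mid\pi(t_i))$. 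The marginal of $\beta_{\varepsilon_i}$ on $g_i(t_i)=a_i$ equals $\sum_{\Pi_i}\lambda^*_i(\varepsilon_i)(\Pi_i)\,h_i(t_i,\varepsilon_i,\pi(t_i))(a_i)$. Summing this against $\upsilon(\varepsilon_i)$ reproduces \eqref{eq: strategy}, so $\sigma^*$ is a CVE.

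The only delicate point is bookkeeping. The $h_i$ must be indexed by the cell $\pi(t_i)$ and must not depend on $\Pi_i$ beyond that cell, as the definition of $\Psi_2$ allows. I would also check that $A_i$ may be infinite, in which case the product measure over $g_i$ is still well defined because $T_i$ is finite. The remark at the start of the Appendix already records this correspondence between $(\lambda,h)$ and $\beta$, and I would invoke it.
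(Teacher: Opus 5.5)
Your proof is correct and follows the paper's argument. The $\Psi_1$-component gives optimality of every partition in the support of $\lambda^*_i(\varepsilon_i)$, and the $\Psi_2$-component expresses $\sigma^*_i$ as a $\lambda^*$-weighted mixture of best-reply selections; your explicit product-measure construction of $\beta_{\varepsilon_i}$ spells out the step the paper only asserts (``clearly'') in the opening remark of the Appendix.
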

\begin{proof}
    Let $(\lambda^*, \sigma^*)\in \Psi(\lambda^*, \sigma^*)$. By  construction, for every profile $(\ve_i)$, $\lambda_i^*(\ve_i)(\Pi_i)>0$ only if $\Pi_i$ solves \eqref{eq: optimal-partition}. Moreover, for every $i$ and $t_i$,  $\sigma_i^*(\cdot|t_i)$ satisfies \eqref{eq: strategy} since it is composed of functions $h_i(t_i, \ve_i, M_i)\in H_2(i, \ve_i, t_i, M_i)(\sigma^*)$.
\end{proof}

\subsection{Lemma \ref{lemma: belief-convergence}}
\label{proof: belief-convergence}
 For a player $i$ 
    let 
    \[\Gamma(\Pi, \overline{x}, \ve, t_i)= \sum_{t_i^\prime\in \pi(t_i)} \mu(t_i^\prime| t_i^\prime \in \pi(t_i)) \left(\bar{x}_i(t_i^\prime)+\ve_{t_i^\prime}\right)\]
    be the belief of player $i$ when adopting partition $\Pi$, given the training sample noise $\ve$, the vector of sufficient statistics $\overline{x}$, and the type $t_i$.
    
\begin{lemma}
    \label{lemma: belief-convergence}
    Let $\Pi^n(\ve)$ be the chosen partition when the realization of the training sample noise is $\ve$. For each player $i$ and type $t_i$,
    \[\Gamma(\Pi^n(\ve), \overline{x}^{\sigma^n}, \ve, t_i)\xrightarrow[n\to \infty]{}  \Gamma(\Pi^{\min}, \overline{x}^{\sigma}, 0, t_i)\]
    in probability.\\
\end{lemma}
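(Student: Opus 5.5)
The argument separates the vanishing noise from the endogenous partition choice. Here $\Pi^{\min}$ denotes the finest feasible partition $\Pi_i^{\textrm{finest}}$. Write $\overline{x}^n=\overline{x}_i^{\sigma^n}$. Since $\sigma^n\to\sigma$ and $\overline{x}_i^{\sigma}$ is continuous in $\sigma$, we have $\overline{x}^n\to\overline{x}:=\overline{x}_i^{\sigma}$. Moreover, $\var(\ve^n)\to 0$ implies $\max_{t_i}|\ve^n_{t_i}|\to 0$ in probability, because $T_i$ is finite. Since $\Gamma$ is linear in $(\overline{x},\ve)$ for a fixed partition,
\[
\Gamma(\Pi^n(\ve^n),\overline{x}^n,\ve^n,t_i)-\Gamma(\Pi^n(\ve^n),\overline{x},0,t_i)\to 0
\]
in probability, uniformly over partitions, because there are finitely many of them. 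It therefore suffices to show that $\Gamma(\Pi^n(\ve^n),\overline{x},0,t_i)\to\Gamma(\Pi^{\min},\overline{x},0,t_i)$ in probability. In words, the noiseless cell-average of $\overline{x}$ under the selected partition must coincide, with probability tending to one, with its value under the finest partition.

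The key step is a bias bound from the selection criterion. Expanding the square in \eqref{eq: optimal-partition}, the $\eta$-term contributes a constant $V$ independent of $\Pi$. Hence
\[
L(\Pi,\ve)=\sum_{t_i}\mu(t_i)\bigl[\Gamma(\Pi,\overline{x}^n,\ve,t_i)-\overline{x}^n(t_i)\bigr]^2+V.
\]
Define the bias $B^n(\Pi)=\sum_{t_i}\mu(t_i)[\Gamma(\Pi,\overline{x}^n,0,t_i)-\overline{x}^n(t_i)]^2$. The remaining terms in $L$ are cross terms and noise terms, bounded by $C(\max|\ve_{t_i}|+\max|\ve_{t_i}|^2)$ with $C$ uniform in $n$ because $\overline{x}^n$ is bounded. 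Therefore $|L(\Pi,\ve^n)-V-B^n(\Pi)|\to 0$ in probability, uniformly in $\Pi$. Optimality of $\Pi^n(\ve^n)$ gives $L(\Pi^n(\ve^n),\ve^n)\le L(\Pi^{\min},\ve^n)$, and hence $B^n(\Pi^n(\ve^n))\le B^n(\Pi^{\min})+o_p(1)$.

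Next, I use the finest partition's orthogonality. Because $\Pi^{\min}$ refines every feasible $\Pi$, the cell-average under $\Pi$ is the conditional expectation of the cell-average under $\Pi^{\min}$. Pythagoras then gives
\[
B^n(\Pi)=B^n(\Pi^{\min})+\sum_{t_i}\mu(t_i)\bigl[\Gamma(\Pi,\overline{x}^n,0,t_i)-\Gamma(\Pi^{\min},\overline{x}^n,0,t_i)\bigr]^2 .
\]
Combined with the previous display, the second sum evaluated at $\Pi=\Pi^n(\ve^n)$ tends to $0$ in probability. Passing from $\overline{x}^n$ to $\overline{x}$ by continuity, and using $\mu(t_i)>0$ for types in the support, this yields the pointwise convergence of $\Gamma(\Pi^n(\ve^n),\overline{x},0,t_i)$ to $\Gamma(\Pi^{\min},\overline{x},0,t_i)$ in probability. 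Types with $\mu(t_i)=0$ are irrelevant.

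The main obstacle is making the $o_p(1)$ bounds uniform over the endogenously chosen, random partition $\Pi^n(\ve^n)$, given that $\overline{x}^n$ moves with $n$. Finiteness of $\mathcal{P}_i$ and of $T_i$, together with boundedness of $\overline{x}^n$ from finiteness of the game, handles this by taking maxima over partitions. The Pythagorean identity is the conceptual point. It shows that partitions may fail to converge to $\Pi^{\min}$, yet any partition the criterion selects must asymptotically reproduce the finest-partition beliefs, which is exactly what the lemma asserts.
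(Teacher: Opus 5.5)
Your argument is correct, and its key step differs from the paper's. The paper argues qualitatively, in two stages:
\begin{enumerate}
\item From the optimality inequality and the convergence of $L(\Pi,\overline{x}^{\sigma^n},\ve)$ to the noiseless limit $L(\Pi,\overline{x}^{\sigma},0)$, it concludes that with probability tending to one the selected partition lies in $\argmin_{\Pi}L(\Pi,\overline{x}^{\sigma},0)$. This passage implicitly uses a positive gap between the minimal and next-smallest noiseless loss over the finite family $\mathcal{P}_i$. Its displayed probability statement also needs an $\epsilon$ of slack to be literally true.
\item It then uses Lemma \ref{lemma: non-decreasing-SSE} and its equality case to show that every minimizer of the noiseless loss reproduces the finest-partition beliefs.
\end{enumerate}

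You instead work at each $n$ with the moving $\overline{x}^n$. Your Pythagorean identity $B^n(\Pi)=B^n(\Pi^{\min})+\sum_{t_i}\mu(t_i)[\Gamma(\Pi,\overline{x}^n,0,t_i)-\Gamma(\Pi^{\min},\overline{x}^n,0,t_i)]^2$ turns the selected partition's excess loss directly into a bound on the belief discrepancy. So you need neither an argmin identification nor a gap argument, and you get a rate: the discrepancy is of order $\sqrt{\max_t|\ve^n_t|/\mu(t_i)}$.

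Your identity is also the correctly weighted version of what Lemma \ref{lemma: non-decreasing-SSE} is meant to say, and it gives the equality case for free. As literally stated, that lemma normalizes each group's dispersion within the group and drops the group masses $\sum_{x_i\in S}\alpha_i$, which makes it false in general. For example, take values $0,1,0.1,0.9$ with equal weights and $S=\{0,1\}$: the left side is $0.41$ and the right side is $0.205$.

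What the paper's route buys is a structural statement about the partitions themselves: asymptotically only minimizers of the noiseless loss are selected. That makes transparent why the selected partitions need not converge to the finest one even though the beliefs do.
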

\begin{proof}
Denote by $L(\Pi, \overline{x}, \ve)$ the expected out-of-sample prediction error for partition $\Pi$ given the realized training sample $\ve$ and a vector of sufficient statistics $\overline{x}$.

First, for every $n$ and partition $\Pi\in \mathcal{P}_i$, 
\begin{align}
L(\Pi^n(\ve), \overline{x}^{\sigma^n}, \ve)\leq L(\Pi, \overline{x}^{\sigma^n}, \ve) \label{eq: bound-prediction-error}
\end{align}
almost surely.  Note that for every fixed partition $\Pi$, the prediction error $L(\Pi, \overline{x}, \ve)$ is a polynomial of degree two in $\overline{x}$ and $\ve$. Because $\overline{x}^{\sigma^n}\to \overline{x}^{\sigma}$ and $\var(\ve^n)\to 0$, the RHS of equation \eqref{eq: bound-prediction-error} converges to $L(\Pi, \overline{x}^{\sigma}, 0)$ in probability. Consequently,
\[\bbP\left[ L(\Pi^n(\ve), \overline{x}^{\sigma^n}, \ve)\leq \min_{\Pi \in \mathcal{P}_i} L(\Pi, \overline{x}^{\sigma}, 0)\right]\xrightarrow[n\to \infty]{} 1.\]
Moreover, since $\mathcal{P}_i$ is finite,
\[\bbP\left[ \Pi^n(\ve)\in  \argmin_{\Pi \in \mathcal{P}_i} L(\Pi, \overline{x}^{\sigma}, 0)\right]\xrightarrow[n\to \infty]{} 1.\]

Next, Lemma \ref{lemma: non-decreasing-SSE} and the hypothesis that $\Pi^{\textrm{finest}}$ refines every $\Pi\in \mathcal{P}_i$ imply that,
\[\Pi^{\textrm{finest}}\in \argmin_{\Pi^\prime \in \mathcal{P}_i} L(\Pi^\prime, \overline{x}^{\sigma}, 0).\]
Moreover, if
\[\Pi\in \argmin_{\Pi^\prime \in \mathcal{P}_i} L(\Pi^\prime, \overline{x}^{\sigma}, 0),\]
then for all types $t_i$
\[\Gamma(\Pi, \overline{x}^{\sigma}, 0, t_i)= \Gamma(\Pi^{\textrm{finest}}, \overline{x}^{\sigma}, 0, t_i).\]
\end{proof}

\begin{lemma}
    \label{lemma: non-decreasing-SSE}
    Let $X=\{x_i\}$ be a  finite set of numbers and $\alpha_i\in (0, 1), \sum_i\alpha_i=1$. For any $S\subset X$, denote $m(S)=\sum_{i: x_i \in S}\frac{\alpha_i x_i}{\sum_{i: x_i \in S} \alpha_i}$. Then for any $S\neq \emptyset$ 
    \[\sum_{i: x_i \in S} \frac{\alpha_i}{\sum_{i: x_i \in S} \alpha_i} (x_i-m(S))^2 + \sum_{i: x_i \in X\setminus S} \frac{\alpha_i}{\sum_{i: x_i \in X\setminus S} \alpha_i} (x_i-m(X\setminus S))^2\leq \sum_{i: x_i \in X} \frac{\alpha_i}{\sum_{i: x_i \in X} \alpha_i} (x_i-m(X))^2 \]
    with equality only if $m(S)=m(X\setminus S)$.
\end{lemma}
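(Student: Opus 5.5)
The plan is to reduce the printed inequality to the law of total variance and then compare its two sides directly. Write $w=\sum_{i:x_i\in S}\alpha_i$, $m_S=m(S)$, $m_{S^c}=m(X\setminus S)$, and let $V_S$, $V_{S^c}$, $V_X$ denote the three normalized weighted variances in the statement, so that the claim reads $V_S+V_{S^c}\le V_X$. Note that $m(X)=w\,m_S+(1-w)\,m_{S^c}$.

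The first step is to decompose $V_X$. For each $x_i$, expand $(x_i-m(X))^2$ around the mean of the cell containing it. Since the cross terms vanish within each cell, this gives
\[
V_X = w\,V_S+(1-w)\,V_{S^c}+w(1-w)\,(m_S-m_{S^c})^2 .
\]
This identity is routine. It already shows that the weighted combination $wV_S+(1-w)V_{S^c}$ is at most $V_X$, with equality iff $m_S=m_{S^c}$, which is the equality clause in the statement.

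The second step is to pass to the printed, unweighted left-hand side. The printed claim is equivalent to
\[
(1-w)\,V_S + w\,V_{S^c}\;\le\; w(1-w)\,(m_S-m_{S^c})^2 .
\]
In words, the between-cell term must dominate the cross-weighted within-cell variances. I would try to establish this directly, bounding $V_S$ and $V_{S^c}$ by the spread of $X$ around $m_S$ and $m_{S^c}$ respectively.

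This second step is the main obstacle, and I expect it to fail in general. Take $X=\{0,1,10,11\}$ with equal weights and $S=\{0,10\}$. Then $m_S=5$, $m_{S^c}=6$, $V_S=V_{S^c}=25$, and $V_X=25.25$, so $V_S+V_{S^c}=50>V_X$. Hence the inequality as printed cannot be proved without further hypotheses.

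What the proof of Lemma~\ref{lemma: belief-convergence} actually uses is the total prediction error $\sum_{t}\mu(t)(\cdot)^2$, which weights each cell by its mass. That quantity is exactly the $w$-weighted form delivered by the first step. So the argument I would complete is the first step together with its equality case, which is the version in which the displayed left-hand side carries the factors $w$ and $1-w$.
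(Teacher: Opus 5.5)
Your diagnosis is correct: as printed, with each within-cell sum normalized by its own mass, the inequality is false. Your example checks out ($V_S=V_{S^c}=25$, $V_X=101/4$). Indeed, by your identity the printed inequality fails whenever $m(S)=m(X\setminus S)$ and either within-cell variance is positive.

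The paper omits the proof as standard. Your within/between (law of total variance) decomposition, which proves the mass-weighted version $wV_S+(1-w)V_{S^c}\le V_X$ with equality iff $m(S)=m(X\setminus S)$ for $\emptyset\neq S\subsetneq X$, is that standard argument. It is also exactly what the proof of Lemma~\ref{lemma: belief-convergence} needs: at zero noise the out-of-sample error is the validation variance plus $\sum_{M\in\Pi}\mu(M)V_M$. Applying your identity to each cell split by $\Pi^{\textrm{finest}}$ gives both the optimality of $\Pi^{\textrm{finest}}$ and the equal-beliefs step there.
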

\begin{proof}
    The proof is standard and therefore omitted.    
\end{proof}

\subsection{Proof of Proposition \ref{prop: convergence-to-ABEE}}
\label{proof: convergence-to-ABEE}

    Let $\sigma^n$ be a CVE for each $n$. 
    Denote by $\Pi^n(\ve)$ the partition chosen given the realized training sample. Note that $\Pi^n(\ve)$ is a random variable because the training sample $\ve$ is random and because the choice of partition is random as well.

Assume toward a contradiction that for some player $i$ and type $t_i$, $\sigma_i(t_i)(a_i)=3 \beta>0$ but 
\[u_i\left(a_i, t_i, \sum_{t_i^\prime\in \Pi^{\textrm{finest}}(t_i)} \mu(t_i^\prime| t_i^\prime \in \Pi^{\textrm{finest}}(t_i)) \left(\bar{x}_i^{\sigma}(t_i^\prime)\right)\right)\]
\[\leq  u_i\left(a_i^\dagger, t_i, \sum_{t_i^\prime\in \Pi^{\textrm{finest}}(t_i)} \mu(t_i^\prime| t_i^\prime \in \Pi^{\textrm{finest}}(t_i)) \left(\bar{x}_i^{\sigma}(t_i^\prime)\right)\right) - 2 \delta\]
for some $a_i^\dagger \in A_i$ and $\delta>0$.

Let $\gamma$ be such that $u_i(a_i, t_i, x)\leq u_i(a_i^\dagger, t_i, x)-\delta$ for all $x$ within $\gamma$ of
\[\sum_{t_i^\prime\in \Pi^{\textrm{finest}}(t_i)} \mu(t_i^\prime| t_i^\prime \in \Pi^{\textrm{finest}}(t_i)) \left(\bar{x}_i^{\sigma}(t_i^\prime)\right).\]

Denote by $B_\gamma^n$ the set
\[B_\gamma^n=\left\{ (\Pi, \ve) : \left|\sum_{t_i^\prime\in \pi(t_i)} \mu(t_i^\prime| t_i^\prime \in \pi(t_i)) \left(\bar{x}_i^{\sigma^n}(t_i^\prime)+\ve_{t_i^\prime}\right) -  \sum_{t_i^\prime\in \Pi^{\textrm{finest}}(t_i)} \mu(t_i^\prime| t_i^\prime \in \Pi^{\textrm{finest}}(t_i)) \bar{x}_i^{\sigma}(t_i^\prime)\right|\leq \gamma \right\}\]

By Lemma \ref{lemma: belief-convergence},  for all $n$ large enough,
\[\sum_{(\Pi, \ve) \in B^n_\gamma} v^n(\ve)  \lambda_i^n(\Pi, \ve) \geq 1- \beta.\]

Since $\sigma_i^n\to \sigma$, it must be that for all $n$ large enough

\begin{align*}
    2 \beta & \leq \sum_{\ve} \sum_{\Pi} v^n(\ve) \lambda^n(\Pi, \ve) h_i^n(t_i, \Pi, \ve)(a_i) \leq \sum_{(\ve, \Pi)\in B_\gamma^n} v^n(\ve) \lambda^n(\Pi, \ve) h_i^n(t_i, \Pi, \ve)(a_i) + \beta\\
\end{align*}

The inequality requires that $h_i^n(t_i, \Pi, \ve)(a_i)>0$ on $B_\gamma^n$, contradicting the best-response condition, equation \eqref{eq: optimal-action} in Definition \ref{def: compatible}.

We conclude that for each player $i$ and type $t_i$, $\sigma_i(t_i)(a_i)>0$ implies
\[u_i\left(a_i, t_i, \sum_{t_i^\prime\in \Pi^{\textrm{finest}}(t_i)} \mu(t_i^\prime| t_i^\prime \in \Pi^{\textrm{finest}}(t_i)) \left(\bar{x}_i^{\sigma}(t_i^\prime)\right)\right)\geq u_i\left(a_i^\dagger, t_i, \sum_{t_i^\prime\in \Pi^{\textrm{finest}}(t_i)} \mu(t_i^\prime| t_i^\prime \in \Pi^{\textrm{finest}}(t_i)) \left(\bar{x}_i^{\sigma}(t_i^\prime)\right)\right)\]
$ \forall a_i^\dagger\in A_i$,as was to be shown.

\subsection{Proof of Proposition \ref{prop: redundant-partition}}
\label{proof: redundant-partition}

    Given the realized training sample $\ve_i$, the expected out-of-sample prediction error of partition $\Pi$ equals
    \[\bbE_{\eta
}\sum_{t_{i}\in T_{i}}\mu(t_{i})\left[\left(\sum_{t_i^\prime \in \pi(t_i)} \mu(t_i^\prime| t_i^\prime \in \pi(t_i)) \ve_{t_i^\prime}\right)-\eta _{t_{i}})\right]^{2}.\]
By Jensen's inequality, the finer partition $\Pi_i^\prime$ achieves the same out-of-sample prediction error by splitting the cell $\pi(t_i)$ into the cells $\pi^\prime_1$ and $\pi^\prime_2$ only if\footnote{The argument when splitting $\pi(t_i)$ into more than two cells is the same.}
\[ \sum_{t_i^\prime \in \pi(t_i)} \mu(t_i^\prime| t_i^\prime \in \pi(t_i)) \ve_{t_i^\prime}=\sum_{t_i^\prime \in \pi^\prime_1} \mu(t_i^\prime| t_i^\prime \in \pi^\prime_1) \ve_{t_i^\prime}=\sum_{t_i^\prime \in \pi^\prime_2} \mu(t_i^\prime| t_i^\prime \in \pi^\prime_2) \ve_{t_i^\prime}.\]

This implies that choosing the finer partition $\Pi^\prime$ induces the same beliefs as $\Pi$ for any type $t_i$.

\subsection{Proof of Proposition \ref{prop: unbiased-noise}}
\label{proof: unbiased-noise}

Fix $\varepsilon _{i}$ and $\Pi _{i}$. Consider a partition cell $M\in \Pi_{i}\,$. Denote
\[
\bbE_{\eta }\sum_{t_{i}\in M}\mu (t_{i}\mid t_{i}\in M)[\hat{x}^{\varepsilon_{i}}(M)-(\bar{x}_{i}^{\sigma }(t_{i})+\eta _{t_{i}})]^{2}=c_{M}(\varepsilon_{i}) 
\]%
Plugging the definition of $\hat{x}^{\varepsilon _{i}}$ into the L.H.S., we obtain
\[
\bbE_{\eta }\sum_{t_{i}\in M}\mu (t_{i}\mid t_{i}\in M)\left[\sum_{t_{i}^{\prime }\in M}\mu (t_{i}^{\prime }\mid t_{i}^{\prime }\in M)(\bar{x}_{i}^{\sigma }(t_{i}^{\prime })+\varepsilon _{t_{i}^{\prime }})-\bar{x}_{i}^{\sigma }(t_{i})-\eta _{t_{i}}\right] ^{2} 
\]
Expand this expression. Since $\eta _{t_{i}}$ is independently distributed with $E_{\eta }(\eta _{t_{i}})=0$, the only terms in the
elaborated expression that includes $\varepsilon _{i}$ are
\[
\left( \sum_{t_{i}^{\prime }\in M}\mu (t_{i}^{\prime }\mid t_{i}^{\prime}\in M)\varepsilon _{t_{i}^{\prime }}\right) ^{2} 
\]%
and
\[
2\sum_{t_{i}\in M}\mu (t_{i}\mid t_{i}\in M)\left( \sum_{t_{i}^{\prime }\in M}\mu (t_{i}^{\prime }\mid t_{i}^{\prime }\in M)\varepsilon _{t_{i}^{\prime }}\right) \left( \sum_{t_{i}^{\prime }\in M}\mu (t_{i}^{\prime }\mid t_{i}^{\prime }\in M)\bar{x}_{i}^{\sigma }(t_{i}^{\prime })-\bar{x}_{i}^{\sigma }(t_{i})\right) 
\]%
The second term is equal to zero. As to the first term, it remains the same if we replace $\varepsilon _{i}$ with $-\varepsilon _{i}$. It follows that $c_{M}(-\varepsilon _{i})=c_{M}(\varepsilon _{i})$ for every partition cell $M $. Therefore, if $\Pi _{i}$ satisfies \eqref{eq: optimal-partition} under $\varepsilon _{i}$, then it also does so under $-\varepsilon _{i}$. Since $\upsilon $ is symmetric around zero, it follows that the expectation of $\varepsilon _{i}$ conditional on $\Pi _{i}$ satisfying \eqref{eq: optimal-partition} is zero.

\subsection{Proof of Proposition \ref{prop: fine-vs-coarse-partition}}
\label{proof: fine-vs-coarse-partition}

The expected out-of-sample prediction error of the fine partition is
\[\sum_{t_i\in T_i} \mu(t_i) \ve_{t_i}^2 + \sigma_\eta^2\]
and the expected out-of-sample prediction error of the coarse partition is
\[\sum_{t_i\in T_i} \mu(t_i)\left(\sum_{t_i^\prime\in T_i} \mu(t_i^\prime) \ve_{t_i^\prime}  + \sum_{t_i^\prime\in T_i} \mu(t_i^\prime) \overline{x}_i^\sigma(t_i^\prime) -\overline{x}_i^\sigma(t_i)\right)^2 + \sigma_\eta^2.\]
Simple algebra shows that the fine partition is adopted if the inequality \eqref{eq: fine-vs-coarse-partition} holds.

\printbibliography

\end{document}